\documentclass{ws-ijfcs}

\usepackage{amsmath,amssymb}
\usepackage{enumitem}
\usepackage{float}
\usepackage{array}

\def\Que{{\mathbb Q}}

\def\fmod#1 #2{#1\ ({\rm mod}\ #2)}
\DeclareMathOperator{\quo}{quo}
\def\rela{\vartriangleleft}

\begin{document}

\title{The Critical Exponent is Computable for Automatic Sequences}

\markboth{Luke Schaeffer and Jeffrey Shallit}{The Critical Exponent is
Computable for Automatic Sequences}

\title{THE CRITICAL EXPONENT IS COMPUTABLE FOR AUTOMATIC SEQUENCES}

\author{LUKE SCHAEFFER and JEFFREY SHALLIT}
\address{School of Computer Science,
University of Waterloo,
Waterloo, ON  N2L 3G1,
Canada\\
\email{\tt l3schaef@cs.uwaterloo.ca,\ shallit@cs.uwaterloo.ca}}

\maketitle

\begin{abstract}
The {\it critical exponent} of an infinite word is defined to be the supremum
of the exponent of each of its factors.  
For $k$-automatic sequences,
we show that this critical
exponent is always either a rational number or infinite, and its
value is computable.  Our results also apply to variants of the critical
exponent, such as the {\it initial critical exponent} of
Berth\'e, Holton, and
Zamboni and the {\it Diophantine exponent} of Adamczewski and Bugeaud.
Our work generalizes or recovers
previous results of Krieger and others, and  is applicable to
other situations; e.g., the computation of the optimal recurrence constant
for a linearly recurrent $k$-automatic sequence.
\end{abstract}

\section{Introduction}

Let ${\bf a} = (a(n))_{n \geq 0}$ be an infinite sequence (or infinite
word) over a finite
alphabet $\Delta$.  We write ${\bf a}[i] = a(i)$, and for
$i, n \geq 0$ we let
${\bf a}[i..i+n-1]$ denote the factor of length $n$ beginning at position $i$.

If a finite word $w$ is expressed in the form $x^n x'$, where
$n \geq 1$ and $x'$ is a prefix of $x$, then we say that $w$ has
{\it period} $x$ and {\it exponent $|w|/|x|$}.  The shortest such period
is called {\it the} period and the largest such exponent is called
{\it the} exponent and is denoted $\exp(w)$.
For example, the period of {\tt alfalfa} is ${\tt alf}$ and 
$\exp({\tt alfalfa}) = 7/3$.
The {\it critical exponent} of an infinite
word $\bf a$ is defined to be the supremum,
over all nonempty factors $w$ of $\bf a$, of the exponent of $w$; it is
denoted by $c({\bf a})$.
It is possible for the critical exponent $c({\bf a})$
to be rational, irrational, or infinite.
If it is rational, it is possible for $c({\bf a})$
to be attained by some finite factor of $\bf a$, or not attained by
any finite factor.

Critical exponents are an active subject of study.  Here are just
a few examples.

\begin{example}
Consider the Thue-Morse sequence
$$ {\bf t} = {\tt 0110100110010110} \cdots ,$$
where ${\bf t}[i]$ is the sum, modulo $2$, of the digits in the
binary expansion of $i$.  Alternatively, $\bf t$ is the fixed point, 
starting with ${\bf 0}$, of the morphism $\mu$ defined by
${\tt 0} \rightarrow {\tt 01}$ and
${\tt 1} \rightarrow {\tt 10}$.

As is well-known, $\bf t$ contains no
overlaps, that is, no factors of the form $axaxa$, where $a \in
\lbrace {\tt 0,1} \rbrace$ and $x \in \lbrace {\tt 0,1} \rbrace^*$.
On the other hand, $\bf t$ contains square factors such as $\tt 00$.
It follows that the critical exponent of $\bf t$ is $2$, and this
exponent is attained by a factor of $\bf t$.
\end{example}

\begin{example}
The sequence ${\tt 0000} \cdots$ clearly has a critical exponent of
$\infty$, as does any ultimately periodic word.
\end{example}

\begin{example}
The Rudin-Shapiro sequence 
${\bf r} = (r_n)_{n \geq 0} = {\tt 0001001000011101} \cdots$
counts the number of (possibly overlapping) occurrences of ${\tt 11}$,
modulo $2$, in the base-$2$ expansion of $n$.  Its critical
exponent is $4$ and it is attained by, for example, the factor
{\tt 0000} beginning at position $7$; see
\cite{Allouche&Bousquet-Melou:1994}.
\label{rudin}
\end{example}

\begin{example}
The sequence ${\bf c} = {\tt 2102012101202102012021012102012} \cdots$, which
counts the number of $\tt 1$'s between consecutive occurrences 
of $\tt 0$ in $\bf t$, is well-known to be squarefree.  However, since
$\bf t$ contains arbitrarily large squares --- for example, the squares
$\mu^n({\tt 00})$ --- it follows that $\bf c$
contains factors of exponent arbitrarily close to $2$.  Thus its
critical exponent is $2$, but this is not attained by any finite factor.
\end{example}

\begin{example}
Consider the Fibonacci word
$${\bf f} = {\tt 0100101001001010010100100101001001} \cdots , $$
defined to be the fixed point of the morphism ${\tt 0} \rightarrow {\tt 01}$
and ${\tt 1} \rightarrow {\tt 0}$.
Then Mignosi and Pirillo \cite{Mignosi&Pirillo:1992}
proved that the
critical exponent of $\bf f$ is $(5+\sqrt{5})/2$, an irrational
number.  
\end{example}

\begin{example}
In fact, every real number greater than $1$ is the critical exponent
of some infinite word \cite{Krieger&Shallit:2007}, and every
real number $\geq 2$ is the critical exponent of some
infinite binary word \cite{Currie&Rampersad:2008}.
\end{example}

Krieger \cite{Krieger:2007,Krieger:2008,Krieger:2009} showed (among other
things) that if an infinite sequence is given as the fixed point of a uniform
morphism, then its critical exponent is either infinite or a rational
number.  

In this paper we generalize this result to the case of $k$-automatic
sequences.  An infinite
sequence $\bf a$ is
said to be {\it $k$-automatic} for some integer $k \geq 2$
if it is computable by a finite automaton
taking as input the base-$k$ representation of $n$, and
having ${\bf a}[n]$ as the output associated with the last state encountered;
see, for example, \cite{Allouche&Shallit:2003b,Cobham:1972}.  

For example, in Figure~\ref{fig1}, we see an automaton generating the
Thue-Morse sequence ${\bf t} = t_0 t_1 t_2 \cdots = {\tt 011010011001}
\cdots$.  The input is $n$, expressed in base $2$, and the output is
the number contained in the state last reached.

\begin{figure}[htb]
\leavevmode
\centerline{\epsfig{file=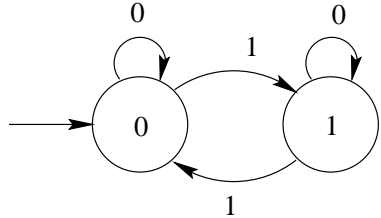,width=5cm}}
\vspace*{8pt}
\protect\label{fig1}
\caption{A finite automaton generating a sequence}
\end{figure}

As is well-known, the class of $k$-automatic sequences
is slightly more general than the class of fixed points of uniform morphisms;
the former also includes words that can be written as the
image, under a coding, of fixed points of uniform morphisms
\cite{Cobham:1972}.  An example of a word that is $2$-automatic but
not the fixed point of any uniform morphism is the Rudin-Shapiro sequence 
$\bf r$,
discussed above in Example~\ref{rudin}.    

(Since this fact does not seem to have
been explicitly proved before, we sketch the proof.  We know that $\bf r$ is
$2$-automatic.  If $\bf r$ were
the fixed point of a $k$-uniform morphism for some $k$ not a power of $2$, then
by Cobham's celebrated theorem \cite{Cobham:1969},
$\bf r$ would be ultimately periodic,
which it is not (since its critical exponent is $4$).  So it must be the
fixed point of a morphism $h$ that is
$2^k$-uniform for some $k \geq 1$.
Now $\bf r$ starts ${\tt 00}$; if ${\bf r} = h({\bf r})$ then
$\bf r$ starts with $h({\tt 0}) h({\tt 0})$.  This means
${\bf r}[2^k - 1] = {\bf r}[2^{k+1} - 1]$.  But clearly the number of
occurrences of ${\tt 11}$ in $2^k - 1$ is one less than the number of
occurrence of ${\tt 11}$ in $2^{k+1} - 1$, a contradiction.)

Allouche, Rampersad, and Shallit
\cite{Allouche&Rampersad&Shallit:2009} proved that the
question

\medskip

\centerline{Given a rational number $r > 1$, is $\bf a$ $r$-power-free?}

\medskip

\noindent is recursively solvable for $k$-automatic sequences
$\bf a$.  More recently,
Charlier, Rampersad, and Shallit \cite{Charlier&Rampersad&Shallit:2011}
showed that 

\medskip

\centerline{Given $\bf a$, is its critical exponent infinite?}

\medskip

\noindent also has a recursive solution for $k$-automatic sequences.

In this paper we show, generalizing some of
the results of Krieger mentioned above,
that the critical exponent of a $k$-automatic sequence is always
either rational or infinite.  Furthermore, we show that
the question

\medskip

\centerline{Given $\bf a$, what is its critical exponent?}

\medskip

\noindent is recursively computable for $k$-automatic sequences.

There are a number of variants of the critical exponent for infinite
words $\bf a$.  For
example, instead of taking the supremum of $\exp(w)$ over all factors
$w$ of $\bf a$, we could take it over only those factors that occur infinitely
often.  Or, letting $x^\beta$ for real $\beta \geq 1$ denote
the shortest prefix of $x^\omega$ of length $\geq \beta|x|$,
we could take the supremum over all real numbers
$\beta$ such that there are arbitrarily large factors of $\bf a$ of the
form $x^\beta$.  We could also restrict our attention to prefixes instead
of factors.  It turns out that for all of these variants,
the resulting critical exponent of automatic sequences is either
rational or infinite, and is computable.

A preliminary version of this paper was presented at the
WORDS 2011 conference in Prague, Czech Republic
\cite{Shallit:2011}.

\section{Two-dimensional automata}

In this paper, we always assume that numbers are encoded in base $k$ using
the digits in $\Sigma_k = \lbrace 0, 1, \ldots, k-1 \rbrace$.

The {\it canonical encoding} of $n$ is the one with no leading zeroes,
and is denoted $(n)_k$.   Thus, for example, we have $(43)_2 = {\tt 101011}$.
Similarly, if $w$ is a word over $\Sigma_k$, then $[w]_k$ denotes the
integer represented by $w$ in base $k$.  Thus $[{\tt 101011}]_2 = 43$.

We will need to encode pairs of
integers.  We handle these
by first padding the representation of
the smaller integer with leading zeroes, so it has the same length
as the larger one,
and then coding the pair as a word over $\Sigma_k^2$.  This gives
the {\it canonical encoding} of a pair $(m,n)$, and is denoted
$(m,n)_k$.  Note that the canonical encoding of a pair does not begin
with a symbol that has $0$ in both components.
For example, the canonical representation of the
pair $(20,13)$ in base $2$ is
$$ [{\tt 1,0}][{\tt 0,1}][{\tt 1,1}][{\tt 0,0}][{\tt 0,1}] ,$$
where the first components spell out ${\tt 10100}$ and the second components
spell out ${\tt 01101}$.   

Given a finite word $x \in (\Sigma_k^2)^*$, we define the {\it projections}
$\pi_i (x)$ ($i = 1,2$) onto the $i$'th coordinate.  
Given a finite word $x$ with $[\pi_2(x)]_k \not= 0$, we define
$$\quo_k(x) = {{[\pi_1(x)]_k} \over {[\pi_2(x)]_k}} .$$  
Thus $\quo_k(x)$ maps words of $(\Sigma_k^2)^*$ to the
non-negative rational numbers $\Que^{\geq 0}$.  (We assume, without
loss of generality, that no denominator is $0$.)
For example, $\quo_2([1,0][1,1][0,1]) = 6/3 = 2$.
If $L \subseteq (\Sigma_k^2)^*$, we define $\quo_k (L) = 
\lbrace \quo_k (x) \ : \ x \in L \rbrace$.

Usually we will assume that the base-$k$ representation is given 
with the most significant digit first, but sometimes, as in the following
result, it is easier to deal with the reversed representations, where the
least significant digit appears first (and shorter representations, if
necessary, are padded with trailing zeroes).  Since the class of regular
languages is (effectively) closed under the map $L \rightarrow L^R$ that
sends a regular language to its reversal, this distinction is not crucial
to our results, and we will not emphasize it unduly.

\begin{lemma}
Let $\beta$ be a non-negative real number and define the
languages
$$L_{\leq \beta} = \lbrace x \in (\Sigma_k^2)^* \ : \ \quo_k(x) \leq \beta \rbrace,$$
and analogously for the relations $<, =, \geq, >, \not=$.
\begin{itemize}
\item[(a)] If $\beta$ is a rational number, then the language
$L_{\leq \beta}$ (resp., $L_{<\beta}$, $L_{=\beta}$, $L_{\geq \beta}$,
$L_{>\beta}$, $L_{\not=\beta}$) is regular.

\item[(b)] If $L_{\leq \beta}$ (resp., $L_{<\beta}$, 
$L_{\geq \beta}$, $L_{>\beta}$) is regular,
then $\beta$ is a rational number.
\end{itemize}
\label{lem1}
\end{lemma}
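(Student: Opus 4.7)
The plan is to prove (a) directly by constructing a finite automaton that tracks a bounded integer invariant, and to prove (b) by showing that too many distinct Myhill--Nerode classes would be forced if $\beta$ were irrational.

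For part (a), write $\beta = p/q$ with $p,q$ non-negative integers and $q>0$, so that $m/n \leq \beta$ is equivalent to the integer inequality $qm \leq pn$. Reading the input MSB-first and letting $P_i, Q_i$ denote the integers represented by the first $i$ digits of $\pi_1(x)$ and $\pi_2(x)$, the ``discrepancy'' $D_i = qP_i - pQ_i$ satisfies the recurrence $D_{i+1} = kD_i + (qa_i - pb_i)$, whose additive term lies in $[-(k-1)p, (k-1)q]$. A one-line check shows that the half-lines $D_i \geq p$ and $D_i \leq -q$ are absorbing under this dynamics. Therefore a DFA can track $D_i$ as long as it lies in $\{-q+1,\ldots,p-1\}$ and otherwise drop into one of two sinks, accepting exactly when $D_N \leq 0$. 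This gives the case of $L_{\leq\beta}$; the five other relations are obtained by adjusting the accept set (for $<,=$) or by complementing/intersecting with $L_{=\beta}$ (for $\geq,>,\neq$), since the regular languages are closed under these operations.

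For part (b), assume $L_{\leq\beta}$ is regular (the $<,\geq,>$ versions follow by complementation). I work in the minimal DFA. For a word $x \in (\Sigma_k^2)^*$ define the real ``threshold''
$$\gamma(x) \;=\; \beta\, n(x) - m(x), \qquad m(x) := [\pi_1(x)]_k,\ n(x) := [\pi_2(x)]_k.$$
A direct computation yields
$$xy \in L_{\leq\beta} \iff m(y) - \beta\, n(y) \leq k^{|y|} \gamma(x),$$
so the Myhill--Nerode class of $x$ depends only on $\gamma(x)$. Moreover, if $\gamma(x) \leq -\beta$ then $xy \notin L_{\leq\beta}$ for every $y$ (reject sink), and if $\gamma(x) \geq 1$ then $xy \in L_{\leq\beta}$ for every $y$ (accept sink).

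The crucial step is to show that any two distinct ``live'' threshold values $\gamma(x) \neq \gamma(x')$ in the open interval $(-\beta, 1)$ determine distinct Myhill--Nerode classes. Given the gap $\gamma(x') - \gamma(x) > 0$, for $L$ large enough the finite set $\{\, s - \beta t : 0 \leq s,t < k^L\,\}$ has spacing at most $1$ throughout $[-\beta k^L, k^L)$, so one can choose a word $y$ of length $L$ with $m(y) - \beta n(y)$ strictly between $k^L\gamma(x)$ and $k^L\gamma(x')$, distinguishing the two classes; filling in the case analysis on the signs of $\gamma(x), \gamma(x')$ is the main technical chore, and I expect this to be the principal obstacle.

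Once the distinguishability step is in hand, the conclusion is immediate: apply it to the prefixes $x_n = (\lfloor \beta n\rfloor, n)_k$ for $n \geq 1$, which all satisfy $\gamma(x_n) = \{\beta n\} \in [0,1) \subset (-\beta,1)$ (with $\beta=0$ already rational). If $\beta$ were irrational, the fractional parts $\{\beta n\}$ would be pairwise distinct (otherwise $\beta(n-n') \in \Zee$ forces $\beta \in \Que$), producing infinitely many Myhill--Nerode classes and contradicting the regularity of $L_{\leq\beta}$. Hence $\beta \in \Que$.
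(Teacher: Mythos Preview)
Your approach is correct, and it differs from the paper's proof in both parts.

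For part (a), the paper processes the input least-significant-digit first, transducing $(p,q)$ to $(pQ,qP)$ on the fly and comparing digit by digit, deferring the details to an earlier paper. Your MSD-first construction tracking the bounded discrepancy $D_i=qP_i-pQ_i$ is a clean, self-contained alternative; the absorbing-halfspace observation is exactly the right idea and makes the finiteness of the state set immediate.

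For part (b), the paper's argument is automata-theoretic rather than Myhill--Nerode: after normalizing so that $1/k\leq\beta<1$, it intersects $L_{\leq\beta}$ with the regular set of words whose second coordinate lies in $10^*$, projects onto the first coordinate, takes the lexicographically largest word of each length (still regular), and then invokes the structure theorem for slender regular languages (exactly one word per length forces a finite union of sets $uv^*w$) to conclude that $\beta$ has an eventually periodic base-$k$ expansion. Your Myhill--Nerode route is genuinely different and arguably more elementary, trading the slender-language machinery for a direct distinguishability step driven by the fractional parts $\{\beta n\}$.

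Two remarks on your plan. First, the assertion that the Myhill--Nerode class of $x$ depends only on $\gamma(x)$ is not quite literally true, because of the convention that words with zero denominator lie outside $L_{\leq\beta}$; but this is harmless here, since your witnesses $x_n=(\lfloor\beta n\rfloor,n)_k$ all have $n(x_n)\geq 1$, whence $n(x_ny)>0$ for every suffix $y$ and the simple characterization applies. Second, the ``principal obstacle'' you anticipate---the sign case analysis on $\gamma(x),\gamma(x')$ across all of $(-\beta,1)$---is not needed for your application: since every $\gamma(x_n)=\{\beta n\}$ already lies in $[0,1)$, you may simply take $t=n(y)=0$ and choose $s=m(y)$ to be any integer in $(k^L\gamma,\,k^L\gamma']\subset[0,k^L)$, which is nonempty once $k^L(\gamma'-\gamma)>1$. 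So the proof is in fact shorter than you expect.
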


\begin{proof}  
We handle only the case $L_{\leq \beta}$, the others being similar.

Suppose $\beta$ is rational.  Then we can write $\beta = P/Q$ for integers
$P \geq 0$, $Q \geq 1$.  On input $x$ representing a pair of integers
$(p,q)$ in the reversed base-$k$ representation, we need to accept
iff $p/q \leq P/Q$, that is, iff $pQ \leq qP$.  To do so, we simply
transduce $p$ and $q$ to $pQ$ and $qP$ on the fly, respectively,
and compare them digit-by-digit.  Minor complications arise if the base-$k$
expansions of $pQ$ and $qP$ have different numbers of digits.  To handle
this, we accept if some path ending in $[0,0]^i$ leads to an accepting
condition.  This construction was already given in
\cite{Allouche&Rampersad&Shallit:2009}, and
the details can be found there.

For the other direction, we use ordinary (most-significant-digit first)
representation.  Without loss of generality, we can assume
$1/k \leq \beta < 1$; if not, we can ensure this condition holds
by modifying the automaton, shifting one
coordinate to the left or right.
Take $L_{\leq \beta}$ and intersect with the (regular)
language of words whose second coordinates are of the form $10^*$;
then project
onto the first coordinate to get $L'$, a regular language over
$\Sigma_k$.  Now take the lexicographically
largest word of each length in $L'$ to get $L''$; by a well-known result 
(e.g., \cite{Shallit:1994}), this language
is also regular.  But $L''$ has exactly one word of each length, so
by another well-known result (e.g., \cite{Kunze&Shyr&Thierrin:1981,Paun&Salomaa:1995,Shallit:1994}),
$L''$ must be a finite union of languages
of the form $u v^* w$.
But then $\beta$ is rational, as it is given
by a number whose base-$k$ representation is
$.uvvv \cdots$ for some words $u, v$.
\end{proof}

\section{Computing the $\sup$}
\label{sup}

In this section, we show that if $L \subseteq (\Sigma_k^2)^*$
is a regular language, then
$\alpha := \sup \, \quo_k (L)$
is either rational or infinite, and in both cases it
is computable.

First, we handle the case where the sup is infinite.

\begin{theorem}
Let $L \subseteq (\Sigma_k^2)^*$ be a language accepted by a DFA
with $n$ states.  Assume that no word of $L$ contains leading 
$0$'s.  Then $\sup \, \quo_k (L) = \infty$ if and only if 
$\quo_k (L) \ \cap \ I[k^n, \infty)$ is nonempty.
\label{infthm}
\end{theorem}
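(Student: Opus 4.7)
The plan is to prove the two directions separately; the forward direction is immediate (if $\sup\,\quo_k(L)=\infty$, then $\quo_k(L)$ contains values exceeding every bound, in particular $k^n$), so I focus on the converse.

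For the converse, suppose $x\in L$ satisfies $\quo_k(x)\geq k^n$, and write $p=[\pi_1(x)]_k$ and $q=[\pi_2(x)]_k\geq 1$. Then $p\geq k^n q$, so the base-$k$ expansion of $p$ has at least $n$ more digits than that of $q$; equivalently, the first $n$ positions of $x$ all have second coordinate $0$, and $|x|\geq n+1>n$. Now I apply the pumping lemma to the $n$-state DFA accepting $L$: there is a decomposition $x=uvw$ with $|uv|\leq n$, $|v|\geq 1$, and $uv^iw\in L$ for every $i\geq 0$. Since $uv$ is a prefix of $x$ of length at most $n$, both $\pi_2(u)$ and $\pi_2(v)$ are all-zero strings, so $[\pi_2(uv^iw)]_k=[\pi_2(w)]_k=q$ for every $i$, independently of how much we pump.

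It remains to show that $[\pi_1(uv^iw)]_k\to\infty$ as $i\to\infty$. Expanding,
$$[\pi_1(uv^iw)]_k \;=\; [\pi_1(u)]_k\cdot k^{i|v|+|w|} \;+\; [\pi_1(v)^i\pi_1(w)]_k,$$
and the first summand grows exponentially in $i$ whenever $\pi_1(u)\neq 0^{|u|}$, while the second does so whenever $\pi_1(v)\neq 0^{|v|}$. The one thing I must rule out is that $\pi_1(u)$ and $\pi_1(v)$ are \emph{both} all-zero. But combined with $\pi_2(uv)=0^{|uv|}$, that would make $uv$ the all-$[0,0]$ word of length $|uv|\geq|v|\geq 1$, forcing $x$ itself to begin with the symbol $[0,0]$ and contradicting the no-leading-zeros hypothesis on $L$. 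Hence the pump genuinely drives $[\pi_1(uv^iw)]_k$ to infinity while leaving the denominator fixed at $q$, so $\quo_k(uv^iw)\to\infty$ and $\sup\,\quo_k(L)=\infty$.

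The main anticipated obstacle is precisely this last step: ensuring that the ``trivial'' pump in which both coordinates of $uv$ are identically zero cannot occur. It is exactly here that the canonical-encoding hypothesis is used; once that degenerate case is excluded, the remainder of the argument is a routine geometric growth estimate.
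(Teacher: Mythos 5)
Your proof is correct and follows essentially the same route as the paper's: pump the word $x=(p,q)_k$, observe that $p\geq k^nq$ forces $\pi_2(uv)$ to be all zeros so the denominator stays fixed, and use the no-leading-$[0,0]$ hypothesis to rule out the degenerate pump and conclude $[\pi_1(uv^iw)]_k\to\infty$. You simply spell out the last step (the explicit expansion of $[\pi_1(uv^iw)]_k$ and the case analysis on $\pi_1(u)$, $\pi_1(v)$) in more detail than the paper does.
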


\begin{proof}
If $\sup\, \quo_k (L) = \infty$,
then clearly $\quo_k (L) \ \cap \ I[k^n, \infty)$ is nonempty.

For the other direction, suppose $\quo_k (L) \ \cap \ I[k^n, \infty)$
is nonempty.  Then $(p,q)_k \in L$ for some integers $p, q$ with
$p \geq k^n q$.   Writing $x = (p,q)_k$, we have $|x| \geq n$, so we
can apply the pumping lemma, writing $x = uvw$ with $|uv| \leq n$ and
$|v| \geq 1$, and $u v^i w \in L$ for all $i \geq 0$.  
Since $p \geq k^n q$, we must have $[\pi_2(uv)]_k = 0$.    Since
$x$ doesn't start with $[0,0]$, we have $[\pi_1(uv^i w)]_k  \rightarrow \infty$.
Hence $\quo_k(u v^i w) \rightarrow \infty$.
\end{proof}

\begin{corollary}
There is an algorithm that,
given a DFA $M$ accepting a regular language $L \subseteq (\Sigma_k^2)^*$,
decides if $\sup\, \quo_k (L(M)) = \infty$.  
\label{infcor}
\end{corollary}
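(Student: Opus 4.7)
The plan is to reduce directly to Theorem~\ref{infthm}, whose hypothesis requires that no word of the input language begin with $[0,0]$. The first task is therefore to preprocess $M$ into an equivalent DFA whose accepted words are all in canonical form.

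From $M$ I would construct an effective DFA $M'$ for the language
\[
L' := \{y : y \text{ is canonical and } [0,0]^i y \in L(M) \text{ for some } i \geq 0\}.
\]
This is regular: left-quotienting $L(M)$ by $[0,0]^*$ amounts to taking as new initial state the set of $M$-states reachable from the original initial state by reading some number of $[0,0]$ symbols, and then determinizing; intersecting the result with the regular set of strings whose first symbol is not $[0,0]$ yields $M'$. Because prepending or stripping a $[0,0]$ leaves both $[\pi_1(\cdot)]_k$ and $[\pi_2(\cdot)]_k$ unchanged, $\quo_k(L') = \quo_k(L(M))$, up to the empty word and the zero-denominator entries that the paper has already set aside by convention.

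Now let $n'$ be the number of states of $M'$. By Theorem~\ref{infthm} applied to $M'$, we have
\[
\sup \quo_k(L(M)) \;=\; \sup \quo_k(L') \;=\; \infty
\quad\Longleftrightarrow\quad L' \,\cap\, L_{\geq k^{n'}} \neq \emptyset.
\]
The threshold $k^{n'}$ is rational, so Lemma~\ref{lem1}(a) supplies an effectively constructible DFA for $L_{\geq k^{n'}}$. Intersecting this with $M'$ produces a regular language, and testing its emptiness is standard. The algorithm returns ``infinite'' precisely when this intersection is nonempty.

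None of the steps is intricate; the only verification worth flagging is that the canonicalization step preserves $\sup \quo_k$, and that reduces to the trivial identity $\quo_k(x) = \quo_k([0,0]^i x)$ for all $i \geq 0$. Thus the real content of the corollary is already packed into Theorem~\ref{infthm} and Lemma~\ref{lem1}; Corollary~\ref{infcor} merely stitches them together with basic closure properties of regular languages.
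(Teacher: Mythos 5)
Your proposal is correct and follows essentially the same route as the paper: strip leading $[0,0]$'s to obtain a DFA $M'$ in canonical form, then apply Theorem~\ref{infthm} by intersecting $L(M')$ with the regular threshold language from Lemma~\ref{lem1} and testing emptiness. You simply spell out the quotient construction and the invariance of $\quo_k$ under stripping $[0,0]$'s in more detail than the paper does.
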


\begin{proof}
We can find a DFA $M'$ accepting $L$ with all leading $0$'s removed
from words.  If $M'$ has $n$ states, then we can compute a DFA $M''$
accepting $L(M') \ \cap \ L_{>k^n}$, and we can decide if $M''$
accepts anything.
\end{proof}

Next, we turn to the case where the sup is finite.
We start with two useful lemmas.  The first is the classical mediant
inequality.

\begin{lemma}
Let $a, b, c, d$ be non-negative real numbers with $c, d \not= 0$ and
${a \over c} < {b \over d}$.  Then
${a \over c} < {{a+b} \over {c+d}} < {b \over d}$. 
\label{mediant}
\end{lemma}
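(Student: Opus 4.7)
The plan is to reduce both inequalities to the single cross-multiplied hypothesis $ad < bc$, which follows immediately from $a/c < b/d$ since $c, d > 0$.

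First I would establish the hypothesis in its most useful form: multiplying $a/c < b/d$ through by the positive quantity $cd$ yields $ad < bc$. This is the only piece of input we have, so both the left and right inequalities of the conclusion must ultimately be equivalent to it.

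For the left inequality $a/c < (a+b)/(c+d)$, I would clear denominators (using $c > 0$ and $c+d > 0$) to get the equivalent statement $a(c+d) < c(a+b)$. Expanding and cancelling the common $ac$ term reduces this to $ad < bc$, which is our hypothesis. Symmetrically, for the right inequality $(a+b)/(c+d) < b/d$, clearing denominators gives $d(a+b) < b(c+d)$; expanding and cancelling the common $bd$ reduces this to $ad < bc$ as well.

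There is really no main obstacle here — this is a one-line cross-multiplication argument, and the lemma is stated only to be invoked later. The only thing to be slightly careful about is that the denominators $c$, $d$, and $c+d$ are all strictly positive (so that multiplying through preserves the direction of the inequality), which follows from the hypotheses $c, d \neq 0$ together with non-negativity.
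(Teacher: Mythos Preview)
Your argument is correct and is exactly the standard cross-multiplication proof of the mediant inequality. The paper itself does not supply a proof of this lemma at all --- it merely labels it ``the classical mediant inequality'' and states it --- so there is nothing further to compare.
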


The next is a fundamental inequality for $\quo$.

\begin{lemma}
Let $u, v, w \in (\Sigma_k^2)^*$ such that
$|v| \geq 1$, and such that
$[\pi_1(uvw)]_k$ and
$[\pi_2(uvw)]_k$ are not both $0$.  
Define 
\begin{equation}
\gamma(u,v) := {{[\pi_1(uv)]_k - [\pi_1(u)]_k} \over
	{[\pi_2(uv)]_k - [\pi_2(u)]_k}}
\label{gam}
\end{equation}
and
\begin{equation}
U := \begin{cases}
\quo_k(w), & \text{if $[\pi_1(uv)]_k = [\pi_2(uv)]_k = 0$}; \\
\infty, & \text{if $[\pi_1(uv)]_k > 0$ and $[\pi_2(uv)]_k = 0$}; \\
\gamma(u,v),	& \text{otherwise.}
\end{cases}
\end{equation}
Then exactly one of the following cases occurs:
\begin{itemize}
\item[(i)] $\quo_k(uw) < \quo_k(uvw) < \quo_k(uv^2w) < \cdots < U$ ;
\item[(ii)] $\quo_k(uw) = \quo_k(uvw) = \quo_k(uv^2w) = \cdots = U$ ;
\item[(iii)] $\quo_k(uw) > \quo_k(uvw) > \quo_k(uv^2w) > \cdots > U$ .
\end{itemize}
Furthermore, $\lim_{i \rightarrow \infty} \quo_k(uv^iw) = U$.
\label{mainl}
\end{lemma}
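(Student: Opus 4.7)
The plan is to derive explicit closed-form expressions for $p_i := [\pi_1(uv^iw)]_k$ and $q_i := [\pi_2(uv^iw)]_k$, and then to reduce the analysis of $\quo_k(uv^iw) = p_i/q_i$ to an elementary computation with rational functions of $K^i$, where $K := k^{|v|}$.

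First, by writing each $uv^iw$ as a base-$k$ concatenation and summing the resulting geometric series in $K$, I expect expressions of the form $p_i = \alpha K^i + \beta$ and $q_i = \alpha' K^i + \beta'$, where $\alpha, \alpha', \beta, \beta'$ are constants depending on $u$, $v$, $w$, $k$, and $|w|$, but not on $i$. A direct check shows that whenever $\alpha' \neq 0$, one has $\alpha/\alpha' = \gamma(u,v)$, so $\quo_k(uv^iw) \to \alpha/\alpha' = \gamma(u,v)$, matching the generic branch of the definition of $U$.

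Second, I would establish monotonicity by the cross-multiplication identity $p_{i+1} q_i - p_i q_{i+1} = K^i(K-1)(\alpha\beta' - \alpha'\beta)$, which shows that $\quo_k(uv^{i+1}w) - \quo_k(uv^iw)$ has the same sign as $\alpha\beta' - \alpha'\beta$, a quantity independent of $i$. Combined with the limit computation, this forces the sequence to lie in exactly one of cases (i), (ii), or (iii).

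Finally, the two degenerate subcases need separate handling. If $\alpha' = 0$, the closed form forces $[\pi_2(uv)]_k = [\pi_2(u)]_k = 0$, and the standing assumption that the denominator of $\quo_k(uv^iw)$ is nonzero gives $\beta' > 0$; then if $\alpha > 0$ we have $p_i \to \infty$ while $q_i$ stays bounded, recovering the limit $\infty$ promised by $U$, and if $\alpha = 0$ both $p_i$ and $q_i$ are constant in $i$ and their common ratio equals $\quo_k(w) = U$, so case (ii) holds trivially. The main obstacle, such as it is, will be the careful bookkeeping in the degenerate cases and the verification that the $k^{|w|}/(K-1)$ prefactors appearing in $\alpha$ and $\alpha'$ cancel to recover precisely the expression for $\gamma(u,v)$ from equation (\ref{gam}); once that bookkeeping is set down, cases (i), (ii), (iii) and the limit claim all fall out of the two identities above.
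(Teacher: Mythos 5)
Your argument is correct, and it reaches the trichotomy by a route that is organized differently from the paper's. The paper does not write down the global closed form $p_i=\alpha K^i+\beta$; instead it uses the one-step recurrence $[\pi_j(uv^{i+1}w)]_k = A_j C + B_j$ with $A_j=[\pi_j(uv)]_k-[\pi_j(u)]_k$, $B_j=[\pi_j(uv^iw)]_k$ and $C=k^{i|v|+|w|}$, and then invokes the mediant inequality to show that $\quo_k(uv^{i+1}w)$ always lies (weakly) between $\quo_k(uv^iw)$ and $A_1/A_2=\gamma(u,v)$, propagating the comparison at $i=0$ by induction; the limit is computed separately by reading $[\pi_j(uv^iw)]_k/k^{i|v|+|w|}$ as a truncated base-$k$ expansion of $[\pi_j(u)]_k+[\pi_j(v)]_k/(k^{|v|}-1)$. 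Your determinant identity $p_{i+1}q_i-p_iq_{i+1}=K^i(K-1)(\alpha\beta'-\alpha'\beta)$ gives strict monotonicity (or constancy) in one stroke, with the sign manifestly independent of $i$ and no induction needed, and the limit falls out of the same closed form once you check $\alpha/\alpha'=\gamma(u,v)$; the paper's mediant route instead delivers directly the sandwiching of each term between its predecessor and $U$, which is what makes the terminal comparisons with $U$ in cases (i) and (iii) immediate, whereas you obtain them from strict monotonicity together with convergence to $U$ --- equally valid. Both proofs ultimately rest on the same fact, namely that $\bigl([\pi_1(uv^iw)]_k,[\pi_2(uv^iw)]_k\bigr)$ is an affine function of $k^{i|v|}$, and your bookkeeping in the degenerate cases ($\alpha'=0$, corresponding to $[\pi_2(uv)]_k=0$) matches the case split in the definition of $U$.
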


\begin{proof}
Fix an integer $i \geq 0$.  Define
$$ A_j := [\pi_j(uv)]_k - [\pi_j(u)]_k$$
and
$$ B_j := [\pi_j(uv^iw)]_k$$
for $j = 1,2$
and define $C := k^{i|v|+|w|}$.
Then 
$$ [\pi_j(uv^{i+1}w)]_k = A_j C + B_j $$
for $j = 1,2$.
It follows that 
\begin{eqnarray}
 \quo_k(uv^{i+1}w) - \quo_k(uv^iw) &=&
	{{[\pi_1(uv^{i+1}w)]_k}\over
	{[\pi_2(uv^{i+1}w)]_k}} -
	{{[\pi_1(uv^iw)]_k}\over
	{[\pi_2(uv^iw)]_k}}   \nonumber \\
&=& {{A_1C + B_1} \over {A_2C+B_2}}  - {{B_1} \over {B_2}}. 
\label{quok}
\end{eqnarray}
From the mediant inequality (Lemma~\ref{mediant}) we have
$${{B_1} \over {B_2}} \rela {{A_1} \over {A_2}}
 \quad \implies \quad
 {{B_1} \over {B_2}} \rela {{A_1C + B_1} \over {A_2 C + B_2}}
	\rela {{A_1} \over {A_2}} $$
where $\rela$ is any one of the three relations
$<, =, > $. 
In other words,
$$ \quo_k(uv^i w) \rela U \quad \implies \quad
	\quo_k(uv^iw ) \rela \quo_k(uv^{i+1}w) \rela U.$$
Take $i = 0$ and apply induction to get 
\begin{eqnarray*}
\exists i \quad \quo_k(uv^iw) < U \quad &\implies & \quad \text{ case (i) holds} \\
\exists i \quad \quo_k(uv^iw) = U \quad &\implies & \quad \text{ case (ii) holds } \\
\exists i \quad \quo_k(uv^iw) > U \quad &\implies & \quad \text{ case (iii) holds}. \\
\end{eqnarray*}
This proves our first assertion.

We now prove the assertion about the limit.
Let $j \in \lbrace 1,2\rbrace$, let $i$ be an integer $\geq 1$, and
consider the base-$k$ representation of the rational number
$${{[\pi_j(uv^iw)]_k} \over {k^{i|v|+|w|}}};$$
it looks like 
$$\pi_j(u).\overbrace{\pi_j(v)\pi_j(v) \cdots \pi_j(v)}^i \pi_j (w).$$
On the other hand, the base-$k$ representation of
$$ [\pi_j(u)]_k + {{[\pi_j(v)]_k} \over {k^{|v|} - 1}} $$
looks like
$$ \pi_j(u).\pi_j(v) \pi_j(v) \cdots .$$
Subtracting, we get
$$ \left| {{[\pi_j(uv^iw)]_k} \over {k^{i|v|+|w|}}} -
\left( [\pi_j(u)]_k + {{[\pi_j(v)]_k} \over {k^{|v|} - 1}} \right) \right|
< k^{-i|v|}.$$
It follows that 
$$\lim_{i \rightarrow \infty} {{[\pi_j(uv^iw)]_k} \over {k^{i|v|+|w|}}}
= [\pi_j(u)]_k + {{[\pi_j(v)]_k} \over {k^{|v|} - 1}}.$$
Furthermore, this limit is $0$ if and only if
$[\pi_j(uv)]_k = 0 $.
Hence, provided $[\pi_2(uv)]_k \not= 0$, we get
\begin{eqnarray*}
\lim_{i \rightarrow \infty} \quo_k (u v^i w) &=&
\lim_{i \rightarrow \infty}
{{ [\pi_1(uv^i w)]_k} \over{[\pi_2(uv^i w)]_k}} \\
&=&
\lim_{i \rightarrow \infty}
{{{ [\pi_1(uv^i w)]_k} \over {k^{i|v|+|w|}}} \over
{{ [\pi_2(uv^i w)]_k} \over {k^{i|v|+|w|}}}} \\
&=& {{\lim_{i \rightarrow \infty} {{ [\pi_1(uv^i w)]_k} \over {k^{i|v|+|w|}}} }
\over
{\lim_{i \rightarrow \infty} {{ [\pi_2(uv^i w)]_k} \over {k^{i|v|+|w|}}}}}
\\
&=& {{[\pi_1(u)]_k + {{[\pi_1(v)]_k} \over {k^{|v|} - 1}}} \over
{[\pi_2(u)]_k + {{[\pi_2(v)]_k} \over {k^{|v|} - 1}}}} \\
&=& {{[\pi_1(uv)]_k - [\pi_1(u)]_k} \over
	{[\pi_2(uv)]_k - [\pi_2(u)]_k}} \\
&=& {{A_1}\over{A_2}}.
\end{eqnarray*}

\end{proof}

\begin{theorem}
Let $L \subseteq (\Sigma_k^2)^*$ be a regular language.  Then
$\alpha := \sup\, \quo_k(L)$ is either infinite or rational.
\label{supthm}
\end{theorem}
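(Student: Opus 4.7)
The plan is to combine the pumping lemma for regular languages with Lemma~\ref{mainl} to express $\alpha$ as a maximum over a finite set of rationals. Assume $\alpha < \infty$ (otherwise there is nothing to prove), and fix a DFA for $L$ with $n$ states.

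Call a pair $(u, v)$ \emph{pumpable} if $|uv| \leq n$, $|v| \geq 1$, and there exists $w$ with $uv^i w \in L$ for every $i \geq 0$; there are only finitely many pumpable pairs, and the usual pumping-lemma argument shows that every $x \in L$ with $|x| \geq n$ admits a decomposition $x = uvw$ with $(u, v)$ pumpable. For each pumpable $(u, v)$ I split by the subcases of Lemma~\ref{mainl}. The subcase $[\pi_1(uv)]_k > 0$, $[\pi_2(uv)]_k = 0$ cannot occur, for Lemma~\ref{mainl} would then yield $\quo_k(uv^iw) \to \infty$, contradicting $\alpha < \infty$. The subcase where $uv$ consists entirely of $[0,0]$ symbols leaves $\quo_k$ unchanged under pumping. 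In the remaining subcase, $[\pi_2(uv)]_k > 0$ and the limit $\gamma(u, v)$ is a well-defined rational. Let $\Gamma$ denote the (finite) set of all $\gamma(u, v)$ arising this way; since $\quo_k(uv^iw) \leq \alpha$ for every $i$, passing to the limit gives $\max\Gamma \leq \alpha$.

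The main step is the following shortening claim: for every $x \in L$ with $|x| \geq n$ and $\quo_k(x) > \max\Gamma$, there exists $x' \in L$ with $|x'| < |x|$ and $\quo_k(x') \geq \quo_k(x)$. To prove it, pump $x = uvw$ with $(u, v)$ pumpable. If $uv$ is entirely $[0,0]$ then $\quo_k(uw) = \quo_k(x)$ and $x' := uw$ works. Otherwise $\gamma(u, v) \in \Gamma$, and the strict inequality $\gamma(u, v) \leq \max\Gamma < \quo_k(uvw)$ forces case (iii) of Lemma~\ref{mainl} (cases (i) and (ii) would both give $\quo_k(uvw) \leq \gamma(u, v)$), so $\quo_k(uw) > \quo_k(uvw) = \quo_k(x)$ and $x' := uw$ again works.

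Iterating the shortening until the length drops below $n$ gives, for each $x \in L$ with $\quo_k(x) > \max\Gamma$, some $x^* \in L$ with $|x^*| < n$ and $\quo_k(x^*) \geq \quo_k(x)$. Therefore
$$\alpha \;=\; \max\Bigl(\max\Gamma,\ \max\{\quo_k(y) : y \in L,\ |y| < n\}\Bigr),$$
a maximum over a finite set of rationals, hence rational. The one delicate point is dispatching the two ``degenerate'' pumping subcases in which $\pi_2(uv)$ is all zeros, but both are forced into benign behavior by the trichotomy in Lemma~\ref{mainl}: one contradicts the assumption $\alpha<\infty$ outright, and the other preserves $\quo_k$ under shortening.
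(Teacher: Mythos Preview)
Your proof is correct and uses the same two ingredients as the paper---the pumping lemma together with Lemma~\ref{mainl}---and arrives at the same finite candidate set $S_1\cup S_2$ (your $\Gamma$ is contained in $S_2$, and your short-word set is $S_1$).

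The organization, however, is genuinely different. The paper argues by cases on whether $\alpha$ is \emph{attained}: if it is, a minimality argument shows the witness has length $<n$ (so $\alpha\in S_1$); if it is not, the paper takes a sequence $(x_j)$ with $\quo_k(x_j)\nearrow\alpha$, uses the infinite pigeonhole principle to extract a common pumpable prefix $(u,v)$, and then sandwiches $\alpha$ between $\gamma(u,v)$ and itself to conclude $\alpha=\gamma(u,v)\in S_2$. Your proof sidesteps this dichotomy entirely: the single shortening claim, iterated, shows directly that any $x$ with $\quo_k(x)>\max\Gamma$ is dominated by some short word, so $\alpha=\max\bigl(\max\Gamma,\ \max\{\quo_k(y):|y|<n\}\bigr)$ without ever invoking sequences, Bolzano--Weierstrass-style reasoning, or pigeonhole. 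This is cleaner and slightly more constructive; the paper's route, on the other hand, makes the two sources of $\alpha$ (attained versus limit point) more visibly separate, which foreshadows the special-point analysis in Section~\ref{special}.
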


\begin{proof}
Assume that $\alpha < \infty$.
We will show that $\alpha$ is rational.
In fact, we will show something more:
suppose the DFA $M$ has $n$ states.
Then we claim that $\alpha \in S$, where
$$
S = S_1 \ \cup \ S_2 
$$
and
\begin{eqnarray}
S_1 &=& \lbrace \quo_k (x) \ : \ |x| < n \text{ and } x \in L \rbrace; 
\label{s1} \\
S_2 &=& \lbrace \gamma(u,v) \ : \ |uv| \leq n, \ |v| \geq 1, \text{ and 
there exists $w$ such that $uvw \in L$} \rbrace, \label{s2}
\end{eqnarray}
and $\gamma$ is the function defined in (\ref{gam}).

We will assume, without loss of generality, that no word of $L$ begins
with $[0,0]$.  There are two cases to consider:

\medskip

\noindent Case 1: $\alpha = \quo_k(x)$ for some $x \in L$.
Without loss of generality we can
assume that $x$ is a shortest word achieving the $\sup$. 
We now show $|x| < n$.  If
$|x| \geq n$, then, using the pumping lemma for regular
languages, we can write $x = uvw$ with $|uv| \leq n$ and $|v| \geq 1$,
such that $u v^i w \in L$ for all $i \geq 0$.    Then
by Lemma~\ref{mainl} one of the following two cases must occur:
\begin{itemize}
\item[(a)] $\quo_k(uvw) < \quo_k(uv^2w) < \cdots$;
\item[(b)] $\quo_k(uw) \geq \quo_k(uvw)$;
\end{itemize}
In case (b), we find a shorter word (namely, $uw$),
for which $\quo_k(uw) \geq \quo_k(x)$, contradicting our assumption
that $x$ was the shortest word achieving the $\sup$.  In case (a) we
find a word (namely, $uv^2w$) such that $\quo_k(uv^2w) > \quo_k(x)$,
contradicting the fact that $\quo_k(x) = \sup \, \quo_k (L)$.
Thus $|x| < n$ and hence $\quo_k(x) \in S_1$.

\medskip

\noindent Case 2:  The sup is not achieved on $L$.  Then there must
be an infinite sequence of distinct
words $(x_j)_{j \geq 1}$ with each $x_j \in L$
and $\quo_k(x_j)$ converging strictly monotonically to $\alpha$ from below.
Without loss of generality, we can assume that each $x_j$ is of length
$\geq n$ (the number of states of $M$) and further that 
\begin{equation}
\quo_k(x_j) \geq \quo_k(y) \text{ for each } y \text{ with } 
|y| \leq |x_j|.
\label{eqa}
\end{equation}
By the pumping lemma, we can write each $x_j = u_j v_j w_j$ with
$|u_j v_j| \leq n$ and $|v_j| \geq 1$ such that $u_j v_j^i w_j \in L$
for all $i \geq 0$.

Since $|u_j v_j| \leq n$, there are only finitely
many choices for $u_j v_j$.  By the infinite pigeonhole principle,
there is a {\it single\/} choice of $u_j, v_j$ (say $u,v$) corresponding to
infinitely many decompositions of the $x_j$.  Let us restrict ourselves
to this particular subsequence, which we write as $(x'_j)_{j \geq 1}$.
Thus $x'_j = uvw_j$ for $j \geq 1$.  

Now, appealing once more to Lemma~\ref{mainl}, we see that there
are two possibilities:
\begin{itemize}
\item[(a)] there exists $j$ such that $\quo_k(u w_j) \geq \quo_k(u v w_j)$;
\item[(b)] for all $j \geq 1$ we have
$\quo_k(u w_j) < \quo_k (u v w_j) < \quo_k(uv^2 w_j) < \cdots$.
\end{itemize}
In case (a) we have found a shorter word with a quotient at least as large,
contradicting our assumption (\ref{eqa}).  Hence case (b) must occur.

Since $u v^i w_1 \in L$ for all $j \geq 1$, we have
$\quo_k(u v^i w_1) \leq \alpha$ for all $i \geq 1$, and hence
$\sup_{i \geq 1} \quo_k (u v^i w_1) \leq \alpha$.  
On the other hand,
$\sup_{i \geq 1} \quo_k (u v^i w_1) = \lim_{i \geq 1}
\quo_k (u v^i w_1) = \gamma(u,v)$ by Lemma~\ref{mainl}.
It follows that $\gamma(u,v) \leq \alpha$.

However, for all $j \geq 1$ we have
$\quo_k(u v w_j) \leq \lim_{i \rightarrow \infty} 
\quo_k(u v^i w_j) = \gamma(u,v)$.
Hence
$\sup_{j \geq 1} \quo_k (u v w_j) \leq \gamma(u,v)$.
But
$\alpha = \sup_{j \geq 1} x'_j = \sup_{j \geq 1} \quo_k(u v w_j) 
\leq \gamma(u,v)$.

Putting these two results together, we see that
$\gamma(u,v) = \alpha$, and hence $\alpha \in S_2$. 
\hphantom{aaa} 
\end{proof}

\begin{corollary}
There is an algorithm that,
given a DFA $M$ accepting $L \subseteq (\Sigma_k^2)^*$, 
will compute $\alpha = \sup\, \quo_k(L)$.
\label{computethm}
\end{corollary}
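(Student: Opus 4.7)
The plan is to combine Corollary \ref{infcor} with the structural content of Theorem \ref{supthm} and the effectiveness of Lemma \ref{lem1}. First, I would preprocess $M$ so that it rejects any word beginning with $[0,0]$ (by standard DFA constructions) and then invoke Corollary \ref{infcor} to decide whether $\alpha = \infty$. If so, the algorithm returns $\infty$ and halts. Otherwise, Theorem \ref{supthm} guarantees that $\alpha$ is rational and lies in the finite set $S = S_1 \cup S_2$ defined in equations (\ref{s1}) and (\ref{s2}).

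The next step is to enumerate $S$ explicitly. To build $S_1$, I would iterate over every $x \in (\Sigma_k^2)^*$ with $|x| < n$, test $x \in L(M)$ by simulation, and, when $[\pi_2(x)]_k \neq 0$, record the rational $\quo_k(x)$. To build $S_2$, I would iterate over all pairs $(u,v)$ with $|uv| \leq n$ and $|v| \geq 1$, check via a standard reachability computation in $M$ whether the state reached after reading $uv$ can still reach an accepting state (guaranteeing the existence of a suitable $w$), and, in the cases where $\gamma(u,v)$ is a well-defined finite rational (i.e.\ $[\pi_2(uv)]_k \neq [\pi_2(u)]_k$), record its value. Since $\alpha$ is assumed finite, the values of $\gamma(u,v)$ relevant to witnessing $\alpha$ are all finite, so we incur no loss by discarding the degenerate pairs. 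Both $S_1$ and $S_2$ are finite and are produced with explicit $P/Q$ representations.

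Finally, I would select the correct element of $S$ as follows. For each candidate $\beta \in S$, Lemma \ref{lem1}(a) provides an effective construction of a DFA for the regular language $L_{>\beta}$. I would form the product automaton accepting $L(M) \cap L_{>\beta}$ and test it for nonemptiness. By the definition of supremum, $\beta \geq \alpha$ if and only if $L(M) \cap L_{>\beta} = \emptyset$, so
\[
\alpha = \min\bigl\{ \beta \in S \ : \ L(M) \cap L_{>\beta} = \emptyset \bigr\}.
\]
Since Theorem \ref{supthm} ensures $\alpha \in S$, this minimum is attained and equals $\alpha$; output its $P/Q$ representation.

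The main obstacle is conceptual rather than technical: one must recognize that Theorem \ref{supthm} has already reduced the problem to selection from a finite, enumerable set of rational candidates, and that the role of Lemma \ref{lem1} is precisely to let us \emph{verify} a proposed value of $\alpha$ by emptiness-checking against $L_{>\beta}$. Beyond this, all remaining effort is bookkeeping --- preprocessing $M$ to eliminate leading $[0,0]$'s so that Corollary \ref{infcor} applies cleanly, and handling the degenerate cases $[\pi_2(uv)]_k = [\pi_2(u)]_k$ in the definition of $\gamma$ --- and is entirely routine given the earlier results.
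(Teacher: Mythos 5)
Your proposal is correct and follows essentially the same route as the paper: decide the infinite case via Corollary~\ref{infcor}, enumerate the finite candidate set $S_1 \cup S_2$ from Theorem~\ref{supthm}, and recover $\alpha$ as the least $\beta$ in that set with $L(M) \cap L_{>\beta} = \emptyset$, using Lemma~\ref{lem1}(a) and an emptiness test on the product automaton. The extra bookkeeping you supply (stripping leading $[0,0]$'s, discarding degenerate pairs with $[\pi_2(uv)]_k = [\pi_2(u)]_k$) is consistent with what the paper leaves implicit.
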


\begin{proof}
Using Corollary~\ref{infcor}, we have
an algorithm to decide if $\alpha$ is infinite.

Otherwise, we know from the proof of Theorem~\ref{supthm} that
$\alpha$ lies in $S_1 \ \cup\  S_2$, where $S_1$ and $S_2$ are
finite sets that we can compute explicitly from $M$.
Furthermore,
$$ \alpha = \min \ \lbrace \beta \in S_1 \ \cup \ S_2 \ : \
	L(M) \ \cap \ L_{>\beta} = \emptyset \rbrace.$$
So it suffices to check,
for each $\beta \in S_1 \cup S_2$, if
the language
$L(M) \ \cap \ L_{>\beta}$ is empty, which can be done using the
usual depth-first search techniques on the automaton for the
intersection. 
\end{proof}

\section{Computing the largest special point}
\label{special}

Let $L \subseteq (\Sigma_k^2)^*$.
We say an extended real number
$\beta$ is a {\it special point} of $\quo_k (L)$ if 
there exists an infinite sequence $(x_j)_{j \geq 1}$
of distinct words of $L$ such that
$\lim_{j \rightarrow \infty} \quo_k (x_j) = \beta$.
Thus a special point is either
an accumulation point of $\quo_k (L)$, or a rational number with infinitely
many distinct representations in $L$.  Note that every infinite language
$L$ has a special point, and indeed, a largest special point.

\begin{theorem}
Let $L$ be an infinite regular language accepted by a DFA with $n$ states.
Then the largest special point of $L$ is either infinite or rational.
\label{specialthm}
\end{theorem}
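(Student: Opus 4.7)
The plan is to proceed along the lines of the proof of Theorem~\ref{supthm}, showing that when the largest special point $\beta^{*}$ is finite, it lies in the same computable set $S_{1}\cup S_{2}$ of rationals defined there. First, I handle the infinite case: $\beta^{*}=\infty$ precisely when there is an infinite sequence of distinct $x_{j}\in L$ with $\quo_{k}(x_{j})\to\infty$, which by a pumping argument in the spirit of Theorem~\ref{infthm} together with Lemma~\ref{mainl} happens iff some reachable pumping pair $(u,v)$ with $|uv|\leq n$ satisfies $[\pi_{1}(uv)]_{k}>0$ and $[\pi_{2}(uv)]_{k}=0$ (case B). If so, we are done.

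Otherwise $\beta^{*}<\infty$. Choose distinct $x_{j}\in L$ with $\quo_{k}(x_{j})\to\beta^{*}$. After passing to a subsequence, either infinitely many $x_{j}$ have length less than $n$ (in which case by the pigeonhole principle some value in $S_{1}$ is hit infinitely often and must equal $\beta^{*}$), or $|x_{j}|\geq n$ throughout. In the latter case, apply the pumping lemma to write $x_{j}=u_{j}v_{j}w_{j}$ with $|u_{j}v_{j}|\leq n$ and $|v_{j}|\geq 1$, pigeonhole to fix a common $(u,v)$, and observe as in the proof of Theorem~\ref{supthm} that $\beta^{*}<\infty$ rules out case B, so $\gamma(u,v)\in\mathbb{Q}$ is well-defined. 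Applying Lemma~\ref{mainl} to $(uv^{i}w_{1})_{i\geq 0}$ shows $\gamma(u,v)$ is itself a special point, hence $\gamma(u,v)\leq\beta^{*}$. One more pigeonhole into the three cases of Lemma~\ref{mainl}: cases (i) and (ii) give $\quo_{k}(uvw_{j})\leq\gamma(u,v)$, which combined with $\quo_{k}(uvw_{j})\to\beta^{*}$ and $\gamma(u,v)\leq\beta^{*}$ forces $\beta^{*}=\gamma(u,v)\in S_{2}$.

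The hard step is case (iii), where $\quo_{k}(uvw_{j})>\gamma(u,v)$ and one only deduces $\beta^{*}\geq\gamma(u,v)$. The move I would make is to ``unpump'': since $uw_{j}\in L$ with $\quo_{k}(uw_{j})>\quo_{k}(uvw_{j})\to\beta^{*}$, any subsequential limit of $\quo_{k}(uw_{j})$ is both $\geq\beta^{*}$ (by strict inequality) and $\leq\beta^{*}$ (by maximality of $\beta^{*}$ among special points), hence equals $\beta^{*}$; this yields a strictly shorter subsequence of distinct words $(uw_{j_{m}})$ in $L$ with $\quo_{k}\to\beta^{*}$, and we reapply the pumping argument. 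Either some iteration lands in case (i) or (ii) and delivers $\beta^{*}=\gamma(u',v')\in S_{2}$, or the iteration remains in case (iii) throughout. The crux of the argument, and the step I expect to be the main obstacle, is ruling out an infinite case-(iii) regress: since pumping pairs $(u',v')$ with $|u'v'|\leq n$ form a finite set, a pigeonhole on them together with a structural analysis --- for instance, tracking the simple ``skeleton'' path of length less than $n$ obtained by iteratively removing all cycles from $x_{j}$, and using that case-(iii) unpumpings strictly increase $\quo_{k}$ --- should either force entry into case (i) or (ii) at some stage, giving $\beta^{*}\in S_{2}$, or force $\beta^{*}=\quo_{k}(s)$ for a simple path $s\in L$ of length less than $n$, giving $\beta^{*}\in S_{1}$. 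Formalizing this termination cleanly is where I expect most of the technical work to lie.
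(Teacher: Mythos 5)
Your treatment of the infinite case, the reduction to a fixed pumping pair $(u,v)$, and cases (i)--(ii) of Lemma~\ref{mainl} is sound and essentially parallels the paper's argument; your direct squeeze (observing that $\gamma(u,v)$ is itself a special point via the sequence $(uv^iw_1)_{i\geq 0}$, hence $\gamma(u,v)\leq\beta^*$, while cases (i)--(ii) give $\beta^*=\lim_j\quo_k(uvw_j)\leq\gamma(u,v)$) is if anything more direct than the paper's. A minor quibble: the sub-case ``infinitely many $x_j$ of length $<n$'' is vacuous, since there are only finitely many words of length $<n$ and the $x_j$ are distinct; $S_1$ plays no role in this theorem.

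The genuine gap is in case (iii), precisely where you flag it. Your unpumping step is correct up to the point where you obtain $\lim_m\quo_k(uw_{j_m})=\beta^*=\lim_m\quo_k(uvw_{j_m})$, but the proposed iteration does not terminate as sketched: the words $uw_{j_m}$ are still of unbounded length (you have deleted one occurrence of $v$ from each word, not reduced anything to a simple path), so re-running the pumping argument can land in case (iii) indefinitely with no decreasing quantity; and fully unpumping a single $x_j$ down to a simple path $s_j$ only yields $\quo_k(s_j)\geq\quo_k(x_j)$, an inequality in the wrong direction, since $\sup\,\quo_k(L)$ may strictly exceed the largest special point. The missing ingredient is quantitative rather than combinatorial: if the two sequences $(\quo_k(uv^iw_j))_{j}$ and $(\quo_k(uv^{i+1}w_j))_{j}$ converge to the same limit, that common limit must equal $\gamma(u,v)$. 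This is exactly the paper's Lemma~\ref{limlem}, proved from the identity underlying Lemma~\ref{mainl}: the difference $\quo_k(uv^{i+1}w_j)-\quo_k(uv^iw_j)$ has the mediant form $\frac{A_1C+B_1}{A_2C+B_2}-\frac{B_1}{B_2}$, and forcing it to $0$ forces $B_1/B_2\to A_1/A_2=\gamma(u,v)$. With that lemma (applied with $i=0$ to your two sequences), case (iii) closes immediately with $\beta^*=\gamma(u,v)\in S_2$ and no regress is needed; the paper in fact uses the same lemma to close its version of your cases (i)--(ii) as well.
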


Before we begin the proof, we need the following lemma.

\begin{lemma}
Let $u, v$ be fixed words such that
$[\pi_2(uv)]_k \not= 0$ and let $i$ be a fixed integer.
Let $(w_j)_{j \geq 1}$ be a sequence of words.
If $\lim_{j \rightarrow \infty} \quo_k ( u v^i w_j) =
\lim_{j \rightarrow \infty} \quo_k (u v^{i+1} w_j)$,
then these limits both equal $\gamma(u,v)$, where $\gamma$ is
defined in (\ref{gam}).
\label{limlem}
\end{lemma}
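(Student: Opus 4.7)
The strategy is to reuse the telescoping identity from the proof of Lemma~\ref{mainl} uniformly in $j$. Adopting that notation, set $A_\ell = [\pi_\ell(uv)]_k - [\pi_\ell(u)]_k$ for $\ell = 1,2$ (so that $\gamma(u,v) = A_1/A_2$), and for each $j$ write $p_j = [\pi_1(uv^i w_j)]_k$, $q_j = [\pi_2(uv^i w_j)]_k$, and $c_j = k^{i|v|+|w_j|}$. The identity used in (\ref{quok}) then gives $\quo_k(uv^i w_j) = p_j/q_j$ and $\quo_k(uv^{i+1} w_j) = (A_1 c_j + p_j)/(A_2 c_j + q_j)$, uniformly in $j$.

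A short algebraic manipulation using $A_1 q_j - A_2 p_j = A_2 q_j \bigl(\gamma(u,v) - p_j/q_j\bigr)$ yields the clean factorization
\[
\quo_k(uv^{i+1} w_j) - \quo_k(uv^i w_j) \;=\; \bigl(\gamma(u,v) - p_j/q_j\bigr)\cdot \frac{A_2 c_j}{A_2 c_j + q_j},
\]
which I regard as the heart of the argument. The hypothesis $[\pi_2(uv)]_k \neq 0$, together with $|v| \geq 1$, forces $A_2 = [\pi_2(u)]_k(k^{|v|}-1) + [\pi_2(v)]_k > 0$: otherwise both summands vanish and $[\pi_2(uv)]_k$ would too. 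Splitting $uv^i w_j$ at the boundary between $u$ and $v^i w_j$ gives the positional bounds $p_j/c_j,\, q_j/c_j \in \bigl[[\pi_\ell(u)]_k,\,[\pi_\ell(u)]_k + 1\bigr)$, whence the factor $A_2/(A_2 + q_j/c_j)$ on the right lies in a positive bounded interval; in particular it is bounded below by $A_2/(A_2 + [\pi_2(u)]_k + 1) > 0$, uniformly in $j$.

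Finally, writing $\quo_k(uv^{i+1} w_j) = (A_1 + p_j/c_j)/(A_2 + q_j/c_j)$ and applying the same bounds shows this quantity is uniformly bounded in $j$, so the common limit $L$ is finite and the left side of the displayed equation tends to $0$. Dividing by the bounded-away-from-$0$ factor on the right forces $\gamma(u,v) - p_j/q_j \to 0$, i.e., $L = \gamma(u,v)$, as required. I do not anticipate any serious obstacle here: the proof essentially packages the identity from Lemma~\ref{mainl} with the one-line observation $A_2 > 0$, and the only point requiring care is recording the uniform ``bounded away from zero'' estimate on the right-hand factor.
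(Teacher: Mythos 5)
Your proof is correct and follows essentially the same route as the paper's: both rewrite $\quo_k(uv^{i+1}w_j)-\quo_k(uv^iw_j)$ via the identity from Lemma~\ref{mainl}, normalize by $k^{i|v|+|w_j|}$, and use $A_2>0$ together with the bound $q_j/c_j < [\pi_2(u)]_k+1$ to extract $\gamma(u,v)-\quo_k(uv^iw_j)\to 0$. Your multiplicative factorization just packages the paper's clearing-of-denominators step more explicitly, and your remark that the common limit is finite fills in a point the paper leaves implicit.
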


\begin{proof}
As in the proof of Lemma~\ref{mainl}, define
$A_l = [\pi_l(uv)]_k - [\pi_l(u)]_k$ for $l = 1,2$.
Also define
$$D_{l,j} = [\pi_l ( u v^i w_j )]_k k^{-(i|v|+ |w_j|)}$$
for $l = 1,2$ and $j \geq 1$.  
Then, using (\ref{quok}), the hypothesis on the limits can be restated as
\begin{equation}
\forall \epsilon > 0 \ \exists N \ \forall j \geq N \ 
\left| {{A_1 + D_{1,j}} \over {A_2 + D_{2,j}}} 
- {{D_{1,j}} \over {D_{2,j}}} \right| < \epsilon .
\label{e1}
\end{equation}
Clearing the denominators and simplifying, we see that
(\ref{e1}) implies
\begin{equation}
\forall \epsilon > 0 \ \exists N \ \forall j \geq N \ 
| A_1 D_{2,j} - A_2 D_{1,j} | < \epsilon (A_2 + D_{2,j}) D_{2,j}.
\label{e2}
\end{equation}
Dividing by $A_2 D_{2,j}$, we see that (\ref{e2}) implies
\begin{equation}
\forall \epsilon > 0 \ \exists N \ \forall j \geq N \
\left| {{A_1} \over{A_2}} - {{D_{1,j}} \over {D_{2,j}}} \right|
< \epsilon \left( 1 + {{D_{2,j}} \over {A_2}} \right) .
\label{e3}
\end{equation}
From the hypothesis on $u, v$ we have $A_2 \not= 0$.
But ${{A_1} \over {A_2}} = \gamma(u,v)$ and
${{D_{1,j}} \over {D_{2,j}}} = \quo_k ( u v^i w_j)$,
so (\ref{e3}) can be restated as
$\lim_{j \rightarrow \infty} \quo_k (u v^i w_j ) = \gamma(u,v)$.
\end{proof}

Now we can return to the proof of Theorem~\ref{specialthm}.

\begin{proof}
Let $\alpha$ be the largest special point in $\quo_k (L)$.
Then there is an infinite sequence $(x_j)_{j \geq 1}$ of distinct
words of $L$ such that $\lim_{j \rightarrow \infty} \quo_k (x_j) = \alpha$.
We show that if $\alpha <
\infty$ then $\alpha \in S_2$, where $S_2$ is the set of rationals
defined in (\ref{s2}).

Our proof involves considering more and more refined subsequences
of the $(x_j)$; by abuse of notation we refer to each of these
subsequences as $(x_j)$.

First, we can assume without loss of generality that
$$ n \leq |x_1| < |x_2| < \cdots, $$
where $n$ is the number of states in the minimal DFA accepting $L$.
Now apply the pumping lemma to each $x_j$, obtaining
the decompositions $x_j = u_j v_j w_j$ such that
$|u_j v_j| \leq n$ and $|v_j| \geq 1$ and
$u_j v_j^i w_j \in L$ for all $i \geq 0$.   By the infinite
pigeonhole principle, there must be some $u_j v_j$ that occurs
infinitely often, so by replacing the $(x_j)$ with the appropriate
subsequence, we can also assume that 
the pumping lemma in fact gives the decomposition  $x_j = u v w_j$ 
for each $j \geq 1$.

Applying Lemma~\ref{mainl}, we see that
$\lim_{i \rightarrow \infty} \quo_k(uv^i w_j) = \gamma(u,v)$;
and further, for each $j \geq 1$ we have either
\begin{itemize}
\item[(a)] $\quo_k (uw_j) < \quo_k (uvw_j) < \quo_k (uv^2 w_j) < \cdots
	< \gamma(u,v)$; or 
\item[(b)] $\quo_k (uw_j) \geq \quo_k (uvw_j) \geq \quo_k (uv^2w_j) \geq
\cdots \geq \gamma(u,v)$ .
\end{itemize}
Again, by the infinite pigeonhole principle, at least one of the two
options above must occur for infinitely many
$j$, so by restricting to the
appropriate subsequence,
we can assume that one of the two sets of
inequalities applies for all $j$.    We consider both cases in turn.

\medskip

Case (a):  The sequence $s = (\quo_k(uv^2w_j))_{j \geq 1}$
cannot be unbounded since
$\alpha < \infty$. From the Bolzano-Weierstrass theorem, we know
$s$ has a convergent subsequence, so we can replace $(w_j)$ with the
appropriate subsequence
and define $\beta := \lim_{j \rightarrow \infty} \quo_k(uv^2 w_j)$.
Then $\quo_k(uvw_j) < \quo_k(uv^2w_j)$, so 
\begin{displaymath}
\alpha = \lim_{j \rightarrow \infty} \quo_k(uvw_j) \leq \lim_{j \rightarrow
\infty} \quo_k(uv^2w_j) = \beta.
\end{displaymath}
On the other hand,
$\beta$ is a special point, so $\beta \leq \alpha$. Therefore $\alpha =
\beta$ and Lemma~\ref{limlem} applies,
giving $\alpha = \beta = \gamma(u, v)$.

\medskip

Case (b):  Just like Case (a), except now we consider the
sequence $s = (\quo_k(u w_j))_{j \geq 1}$ instead.

\medskip

In both cases, then, we have shown that $\alpha = \gamma(u,v) \in S_2$, and so
$\alpha$ is rational.
\hphantom{aaa} 
\end{proof}

\begin{corollary}
There is an algorithm that,
given a DFA $M$ accepting an infinite language $L \subseteq (\Sigma_k^2)^*$,
will compute the largest special point in $\quo_k (L)$.
\label{specialcor}
\end{corollary}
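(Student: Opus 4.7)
The plan is to follow the proof of Theorem~\ref{specialthm} closely: that argument already shows $\alpha$ lies in a finite set of rationals extracted from pumping decompositions of $M$, and the algorithm will simply enumerate those candidates and return the maximum. First I would handle the infinite case. Note that $\alpha = \infty$ precisely when $\quo_k(L)$ is unbounded as a subset of $\Que^{\geq 0}$: unboundedness yields an infinite strictly increasing sequence of values in $\quo_k(L)$, witnessed by pairwise distinct words, hence $\alpha = \infty$; conversely, $\alpha = \infty$ forces the set to be unbounded. Corollary~\ref{infcor} decides this condition directly, so in that case I output $\infty$.

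Assume now $\alpha < \infty$. I would enumerate all pairs $(u,v) \in (\Sigma_k^2)^* \times (\Sigma_k^2)^+$ with $|uv| \leq n$ such that (i) reading $u$ from the start state of $M$ reaches some state $q$, (ii) reading $v$ from $q$ loops back to $q$, and (iii) some accepting state is reachable from $q$. This enumeration is finite and effective by inspection of $M$. For each pair, compute $\gamma(u,v)$ from (\ref{gam}); if the denominator $[\pi_2(uv)]_k - [\pi_2(u)]_k$ vanishes while the numerator does not, discard the pair (such a case would force $\alpha = \infty$, already handled). Let $T$ be the resulting finite set of non-negative rationals. The algorithm outputs $\max T$.

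To see that this equals $\alpha$, note first that $T$ consists entirely of special points. Given $(u,v)$ satisfying (i)--(iii), pick any $w$ with $uvw \in L$ (possible by (iii)); then $uv^i w \in L$ for all $i \geq 0$, these words are pairwise distinct since $|v| \geq 1$, and by Lemma~\ref{mainl} their $\quo_k$ values converge to $\gamma(u,v)$, which is therefore a special point. Hence $\max T \leq \alpha$. Conversely, the proof of Theorem~\ref{specialthm} produces precisely a pair $(u,v)$ of the form enumerated above, with $\alpha = \gamma(u,v) \in T$, so $\alpha \leq \max T$.

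The main obstacle, and the reason the argument is not completely automatic, is ensuring that every candidate $\gamma(u,v)$ genuinely arises as a limit of elements of $\quo_k(L)$, rather than being an accidental value of the formula (\ref{gam}) for some non-pumpable pair. This is exactly what Lemma~\ref{mainl} supplies once conditions (i)--(iii) are imposed, which is why restricting the enumeration to genuine pumping decompositions (rather than the broader set $S_2$ from (\ref{s2})) is essential for the converse inequality. Beyond that, the procedure reduces to standard finite-automaton manipulations — reachability and loop-detection in $M$ — together with rational arithmetic to compare the elements of $T$.
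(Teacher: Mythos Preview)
Your argument is correct and takes a genuinely different route from the paper's. The paper enumerates the larger candidate set $S_2$ from (\ref{s2}) and then, for each $\beta \in S_2$, \emph{tests} whether $\beta$ is actually a special point --- invoking an external result (\cite{Rowland&Shallit:2011}, Thm.~24) to decide whether $\beta$ is an accumulation point, and checking whether $L \cap L_{=\beta}$ is infinite to detect infinitely many representations --- and finally returns the largest $\beta$ passing one of these tests. You instead restrict from the outset to pairs $(u,v)$ that are genuine pumping loops in $M$; Lemma~\ref{mainl} then guarantees every resulting $\gamma(u,v)$ is already a special point, so no per-candidate verification is needed and you simply output the maximum. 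The upshot is that your algorithm is more self-contained (no appeal to \cite{Rowland&Shallit:2011}) and conceptually cleaner; the paper's approach, by contrast, separates the ``candidate generation'' of Theorem~\ref{specialthm} from a black-box special-point test.

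One small gap worth patching: you should preprocess $M$ to strip leading $[0,0]$'s (as in Corollary~\ref{infcor}). Without this, a pair $(u,v)$ with $uv \in \{[0,0]\}^+$ could satisfy your conditions (i)--(iii), and then both the numerator and denominator of $\gamma(u,v)$ vanish, a case your discard rule does not cover. Once leading $[0,0]$'s are removed, any enumerated $(u,v)$ has $uv$ a nonempty prefix of some word of $L$, hence $uv$ is not all $[0,0]$'s; combined with your observation that $[\pi_2(uv)]_k = 0$ with $[\pi_1(uv)]_k > 0$ forces $\alpha = \infty$, this ensures $\gamma(u,v)$ is always well-defined in the finite case.
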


\begin{proof}
In Theorem~\ref{specialthm} we showed that the largest special point
is either $\infty$ or contained in the set
$$S_2 = \lbrace \gamma(u,v) \ : \ |uv| \leq n, \ |v| \geq 1, \text{ and 
there exists $w$ such that $uvw \in L$} \rbrace .$$
The former case occurs iff
$\sup\, \quo_k (L) = \infty$,
which can be checked using Corollary~\ref{infcor}.

Otherwise, we (effectively) can list the (finite number of) elements
of $S_2$.  For each $\beta \in S_2$, we can check to see if $\beta$ is
an accumulation point using \cite{Rowland&Shallit:2011}, Thm.\ 24.
We can also check to see if $\beta$ has infinitely many representations in
$L$ by computing a DFA accepting $L \ \cap \ L_{=\beta}$ and then
using the usual method involving cycle detection via depth-first search.
Now $\alpha$ is the largest such $\beta$ for which one of the two
conditions applies.
\end{proof}

Every accumulation point is a special point, but the converse is not
necessarily true.  If, however, our language $L$ has a certain natural
property, then the converse holds.

\begin{theorem}
Let $L \subseteq (\Sigma_k^2)^*$ be a language such that
\begin{itemize}
\item[(a)]  No word of $L$ has leading $[0,0]$'s;
\item[(b)] $\quo_k (L)$ has infinite cardinality;
\item[(c)] If $(p,q)_k \in L$, then $p \geq q$;
\item[(d)] If $(p,q)_k \in L$ and $p > q$ then $(p-1,q)_k \in L$.
\end{itemize}
Then every special point, except perhaps $1$, is an accumulation point.
Furthermore, if (a)--(d) hold, the largest accumulation point
(that is, $\limsup \, \quo_k (L)$) is rational or
infinite, and is computable.
\label{limsupthm}
\end{theorem}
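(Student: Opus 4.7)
The plan is to prove the two assertions in turn, reducing the computation of $\limsup \quo_k(L)$ to the computation of the largest special point (Corollary~\ref{specialcor}) together with the infinity test from Corollary~\ref{infcor}.

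For the first assertion, I would fix a special point $\beta \ne 1$ and show that it must be an accumulation point. Condition (c) places every element of $\quo_k(L)$ in $[1,\infty]$, so every special point is at least $1$, and in fact $\beta > 1$. If $\beta$ is already an accumulation point, we are done, so assume instead that $\beta$ is attained by infinitely many distinct words of $L$. Condition (a) ensures each word of $L$ encodes a unique pair, so we obtain infinitely many pairs $(p,q)$ with $p/q = \beta$; writing $\beta = a/b$ in lowest terms, these pairs are of the form $(ma, mb)$ for $m$ ranging over an unbounded subset of the positive integers. Since $a > b$, we have $p > q$, so condition (d) gives $(ma-1, mb)_k \in L$ with quotient $\beta - 1/(mb)$. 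As $m \to \infty$ these quotients are pairwise distinct, strictly less than $\beta$, and converge to $\beta$, witnessing that $\beta$ is an accumulation point.

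For the second assertion, I would first apply Corollary~\ref{infcor} to check whether $\sup \quo_k(L) = \infty$; if so, the set is unbounded and $\limsup \quo_k(L) = \infty$. Otherwise, let $\alpha$ denote the largest special point, computed via Corollary~\ref{specialcor}; by (c), $\alpha \geq 1$. Since every accumulation point is a special point, the largest accumulation point is at most $\alpha$. If $\alpha > 1$, the first assertion makes $\alpha$ itself an accumulation point, so $\limsup \quo_k(L) = \alpha$. If $\alpha = 1$, then $\quo_k(L)$ is bounded and, by (b), infinite, so Bolzano--Weierstrass produces an accumulation point, which lies in $[1,1]$ by (c) and by being a special point, hence equals $1$. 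In both subcases $\limsup \quo_k(L) = \alpha$, which is rational or infinite by Theorem~\ref{specialthm}.

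The key subtlety is the genuine exception at $\beta = 1$: when $p = q$, condition (d) does not apply and the construction of nearby pairs collapses, so the first assertion can fail there. The plan handles this by treating $\alpha = 1$ separately in the second assertion and invoking (b) to force an accumulation point to exist at the boundary value $1$. A minor verification to keep in mind is that $(ma-1, mb)_k$ is a legitimate canonical encoding whenever (d) is applied; since $ma - 1 \geq mb \geq 1$, this holds automatically, and no degeneracy arises.
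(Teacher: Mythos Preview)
Your proof is correct and follows essentially the same approach as the paper: for the first assertion you use condition (d) to manufacture nearby quotients $(p-1)/q$ converging to $\beta$, exactly as the paper does, and for the second you reduce to the largest special point via Theorem~\ref{specialthm} and Corollary~\ref{specialcor}. Your treatment is actually more careful than the paper's in two places: you make explicit why the perturbed quotients are pairwise distinct (by writing $\beta=a/b$ in lowest terms and parametrizing by $m$), and you handle the boundary case $\alpha=1$ in the second assertion via Bolzano--Weierstrass and condition (b), whereas the paper simply asserts that rationality and computability follow from the earlier results without isolating this case.
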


\begin{proof}
Suppose the conclusion is false.  Then there is a special point
$\alpha > 1$ that is not an accumulation point.  Then there
are infinitely many representations of $\alpha$ in $L$;
choose a sequence of these $(x_i)_{i \geq 1}$ of increasing length.
Writing $x_i = (p_i,q_i)_k$ with $p_i > q_i$, by hypothesis
we get $(p_i-1, q_i)_k \in L$.    But evidently
$\lim_{i \rightarrow \infty} (p_i - 1)/q_i = \alpha$, so
$\alpha$ is an accumulation point, a contradiction.

The results on rationality and computability now follow
from Theorem~\ref{specialthm} and Corollary~\ref{specialcor}.
\end{proof}

\section{Application to the critical exponent and its variants}

We can now apply the results of Sections~\ref{sup} and \ref{special}
to the critical
exponent problem.

\begin{theorem}
The critical exponent of a $k$-automatic sequence
is either
rational or infinite, and is effectively computable.
\label{effectivethm}
\end{theorem}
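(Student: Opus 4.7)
The plan is to reduce the computation of $c(\mathbf{a})$ to an instance of Corollary~\ref{computethm} by producing a regular language $L \subseteq (\Sigma_k^2)^*$ with the property that $c(\mathbf{a}) = \sup\, \quo_k(L)$.

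First, I would rewrite the definition of the critical exponent in a form suitable for our machinery:
$$c(\mathbf{a}) \;=\; \sup \left\{ \ell/p \;:\; \ell \geq p \geq 1 \text{ and some factor of } \mathbf{a} \text{ of length } \ell \text{ admits period } p \right\}.$$
This identity holds because every nonempty factor $w$ with shortest period $x$ contributes the quotient $|w|/|x|$ on the right, and conversely every $\ell/p$ on the right is realized as the exponent of the witnessing factor. I would then introduce the language
$$L \;:=\; \bigl\{ (\ell, p)_k \;:\; \ell \geq p \geq 1 \text{ and } \exists\, i \geq 0 \; \forall\, j \; \bigl(j < \ell - p \;\Rightarrow\; \mathbf{a}[i+j] = \mathbf{a}[i+j+p]\bigr) \bigr\},$$
so that by the preceding observation $c(\mathbf{a}) = \sup\, \quo_k(L)$.

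Next, I would argue that $L$ is regular and that a DFA for $L$ is effectively computable from the automaton that generates $\mathbf{a}$. Membership in $L$ is expressed by a first-order formula in the structure $(\Enn, +, \mathbf{a})$, and such definable sets have effectively computable regular base-$k$ encodings (the same framework used for the decidability of $r$-power-freeness in \cite{Allouche&Rampersad&Shallit:2009}). Concretely, one first builds a product automaton reading $(\ell, p)_k$ together with a guessed starting position $i$ and tracking, via the DFA for $\mathbf{a}$, the symbols at the positions $i+j$ and $i+j+p$ whenever $j < \ell-p$; then one projects out the coordinate $i$ using the subset construction, and intersects with the regular set enforcing $\ell \geq p \geq 1$ and the canonical-encoding convention.

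Once $L$ is in hand, Corollary~\ref{computethm} applied to $L$ immediately yields that $c(\mathbf{a}) = \sup\, \quo_k(L)$ is either infinite or rational, and is effectively computable. The main obstacle in this reduction is the regularity step: one must carefully juggle the canonical base-$k$ encoding of the pair $(\ell, p)$ against the separate arithmetic quantities $i$, $i+j$, and $i+j+p$, and handle the bounded quantifier $j < \ell - p$ correctly. Everything else is a direct appeal to the results of Sections~\ref{sup} and \ref{special}.
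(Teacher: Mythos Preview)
Your proof is correct and follows essentially the same approach as the paper: build a regular language over $(\Sigma_k^2)^*$ encoding the (length, period) pairs of factors of $\mathbf{a}$ using the first-order/automaton machinery of \cite{Allouche&Rampersad&Shallit:2009,Charlier&Rampersad&Shallit:2011}, identify $c(\mathbf{a})$ with $\sup\,\quo_k$ of that language, and then invoke Theorem~\ref{supthm} and Corollary~\ref{computethm}. Your write-up spells out the regularity argument a bit more explicitly and orders the pair as $(\ell,p)$ rather than the paper's $(p,q)$, but the substance is identical.
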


\begin{proof}
Given a $k$-automatic sequence ${\bf a} = (a_i)_{i \geq 0}$, we can,
using the techniques of \cite{Allouche&Rampersad&Shallit:2009,Charlier&Rampersad&Shallit:2011},
(effectively) create a two-dimensional DFA $M$ accepting
\begin{eqnarray*}
L' &=& \lbrace (p,q)_k \ : \ \exists \text{ a factor of $\bf a$ of length
	$q$ with period $p$ } \rbrace  \\
&=& \lbrace (p,q)_k \ : \ \exists i \text{ such that }
	{\bf a}[i..i+q-p-1] = {\bf a}[i+p..i+q-1] \rbrace  \\
&=& \lbrace (p,q)_k \ : \ \exists i \ \forall j, \ 0 \leq j < q-p
	\text{ we have } {\bf a}[i] = {\bf a}[i+p] \rbrace .  \\
\end{eqnarray*}
Then the critical exponent of $\bf a$ is
$\sup\, \quo_k(L')$, which, by Theorem~\ref{supthm}, is rational or
infinite.  The infinite case has already been handled in
\cite{Charlier&Rampersad&Shallit:2011} (or we could use
Corollary~\ref{infcor}).   If it is finite,
Corollary~\ref{computethm}
tells us how to compute it from $M$.
\end{proof}

The same results also hold for the variant of the critical
exponent when the sup is taken over only the factors that occur
infinitely often.

\begin{theorem}
If ${\bf a} = (a_i)_{i \geq 0}$ is a
$k$-automatic sequence, 
the quantity 
$$ c_1({\bf a}) := \sup \lbrace \exp(w) \ : \ 
w \text{ is a finite factor of } {\bf a} \text{ that occurs infinitely
often } \rbrace $$
is either rational or infinite, and is computable.
\end{theorem}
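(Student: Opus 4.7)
The approach is to imitate the proof of Theorem~\ref{effectivethm}, replacing the language $L'$ there by a language $L''$ whose members $(p,q)_k$ encode exactly those pairs for which some factor of ${\bf a}$ of length $q$ with period $p$ occurs infinitely often in ${\bf a}$. Once $L''$ is shown to be regular, we will have $c_1({\bf a}) = \sup\, \quo_k(L'')$; Corollary~\ref{infcor} handles the infinite case, and in the finite case Theorem~\ref{supthm} together with Corollary~\ref{computethm} gives rationality and effective computability.

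To build $L''$ I would reuse the first-order definability machinery of \cite{Allouche&Rampersad&Shallit:2009,Charlier&Rampersad&Shallit:2011}, which translates any first-order formula over $(\Enn,+)$ augmented by the predicate ``${\bf a}[m] = {\bf a}[n]$'' into an effectively constructible DFA on the base-$k$ encodings of its free variables. The required formula in the free variables $p,q$ is
$$\exists i \, \bigl(\, \forall j \ (j < q-p \ \Rightarrow \ {\bf a}[i+j] = {\bf a}[i+p+j]) \,\bigr) \ \wedge \ \bigl(\, \forall N \ \exists i' > N \ \forall j \ (j < q \ \Rightarrow \ {\bf a}[i+j] = {\bf a}[i'+j]) \,\bigr).$$
The first conjunct says that the length-$q$ factor at position $i$ has period $p$, and the second says that this particular factor reappears beyond every bound $N$, i.e., occurs infinitely often. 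Feeding this formula into the machinery yields a DFA for $L''$.

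To confirm that $\sup\,\quo_k(L'') = c_1({\bf a})$, note that for every $(p,q) \in L''$ the factor $w$ witnessed by $i$ satisfies $q/p \leq \exp(w) \leq c_1({\bf a})$; conversely, any infinitely recurring factor $w$ with shortest period $p_{\min}$ contributes the pair $(p_{\min},|w|)$ to $L''$ and realizes quotient exactly $\exp(w)$. Hence the two suprema coincide and the theorem follows from the results of Sections~\ref{sup} and~\ref{special}.

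The one nontrivial point in this plan is the justification that ``occurs infinitely often'' can be expressed by the nested quantifier pattern $\forall N\ \exists i' > N$ without leaving the class of automatic predicates. But this is exactly the closure property under first-order definability already exploited in \cite{Charlier&Rampersad&Shallit:2011} to decide whether $c({\bf a}) = \infty$, so no new ideas are needed beyond what is already in that toolbox; alternatively, one could build $L''$ by hand from the factor-enumeration automata of \cite{Allouche&Rampersad&Shallit:2009} combined with a cycle-detection construction expressing the infinite-recurrence condition, but the logical route is cleaner.
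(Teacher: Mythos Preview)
Your proposal is correct and follows essentially the same approach as the paper: construct a regular language $L''$ of pairs $(p,q)_k$ witnessing a length-$q$, period-$p$ factor that occurs infinitely often, then apply Theorem~\ref{supthm} and Corollary~\ref{computethm} to $\sup\,\quo_k(L'')$. The only cosmetic difference is in how ``infinitely often'' is phrased in the first-order formula: the paper writes it as ``for every occurrence $j$ of this factor there is a later occurrence $\ell>j$'', whereas you write it as ``for every bound $N$ there is an occurrence $i'>N$''; both are first-order over $(\Enn,+)$ with the equality predicate on $\bf a$, hence both yield automata, and they define the same set of pairs. (One tiny notational point: in your displayed formula the $\exists i$ should scope over the conjunction, since $i$ appears in the second conjunct.)
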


\begin{proof}
To see this, it is only necessary to change the appropriate 
two-dimensional DFA to accept
\begin{multline}
L'' = \lbrace (p,q)_k \ : \ \exists i \text{ such that }
	{\bf a}[i..i+q-p-1] = {\bf a}[i+p..i+q-1]  \\
	\text{ and for all } j \text{ such that }
	{\bf a}[i..i+q-1] = {\bf a}[j..j+q-1] \\
	\text{ there exists } \ell > j \text{ such that }
	{\bf a}[i..i+q-1] = {\bf a}[\ell..\ell+q-1]  \rbrace
\end{multline}
The first clause says that the factor of length $q$ at position
$i$ has period $p$, and the other two clauses say that if some
factor equals this one, then there is another occurrence of that
factor further on.

Now apply Theorem~\ref{supthm} and Corollary~\ref{computethm} to $L''$.
\end{proof}

The same results also hold for the variant where 
we only consider exponents $\beta$ that work for arbitrarily large
factors.  This corresponds precisely to our notion of special
point introduced above, 
so that the largest special point of
$\quo_k (L')$ gives the supremum over all such $\beta$.

\begin{theorem}
If ${\bf a} = (a_i)_{i \geq 0}$ is a
$k$-automatic sequence, the quantity 
$$ c_2({\bf a}) := \sup \lbrace \beta \ : \ 
\forall N \geq 1 \ \exists w \text{ with } |w| \geq N
\text{ and } w^\beta \text{ a factor of } {\bf a} \rbrace$$
is either rational or infinite, and is computable.
\label{spec2}
\end{theorem}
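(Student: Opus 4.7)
The plan is to reduce $c_2({\bf a})$ to the largest special point $\alpha$ of $\quo_k(L')$, where $L'$ is the two-dimensional regular language from the proof of Theorem~\ref{effectivethm} whose elements encode factor (period, length) pairs of ${\bf a}$ so that $\quo_k$ returns the corresponding exponent. Working with canonical encodings (so distinct accepted words correspond to distinct pairs), Theorem~\ref{specialthm} and Corollary~\ref{specialcor} finish the argument as soon as I establish $c_2({\bf a}) = \alpha$.

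For $\alpha \geq c_2({\bf a})$, I fix $\beta$ in the defining set. For each $j \geq 1$ the condition furnishes a word $w_j$ with $|w_j| \geq j$ and $w_j^\beta$ a factor of ${\bf a}$, yielding a pair in $L'$ with period $p_j = |w_j|$ and length $q_j = \lceil \beta p_j \rceil$, hence $p_j \to \infty$, distinct canonical encodings, and $q_j/p_j \geq \beta$. Passing to a subsequence along which $q_j/p_j$ converges in $[\beta,\infty]$ yields a special point $\geq \beta$, so $\alpha \geq \beta$, and then $\alpha \geq c_2({\bf a})$.

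For $c_2({\bf a}) \geq \alpha$, I start from distinct pairs in $L'$ with period $p_j$, length $q_j$, and $q_j/p_j \to \alpha$. If the $p_j$ are unbounded, passing to a subsequence I may assume $p_j \to \infty$; then for every $\beta < \alpha$ and every $N$ some $j$ satisfies $p_j \geq N$ and $q_j/p_j \geq \beta$, and the first $p_j$ characters of the witnessing factor serve as a $w$ with $|w| = p_j \geq N$ and $w^\beta$ a factor, so $c_2({\bf a}) \geq \beta$. Otherwise the $p_j$ are bounded; pigeonhole extracts a single period $p$ occurring infinitely often with $q_j \to \infty$, forcing $\alpha = \infty$. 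In this case I show $c_2({\bf a}) = \infty$ by a period-boosting trick: ${\bf a}$ contains factors of period $p$ of arbitrary length, so given target length $N$ and target exponent $M$, I choose $m$ with $mp \geq N$ and pick a factor of length $q_j \geq M m p$; its first $mp$ characters give $w = s^m$ with $|s|=p$, and the whole factor is a prefix of $w^\omega$, so $w^M$ is a factor of ${\bf a}$ with $|w| \geq N$.

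The main obstacle is this bounded-$p_j$ case, where a short period must be boosted into a long $w$ to simultaneously secure $|w| \geq N$ and a large exponent; everything else reduces to the infinite pigeonhole principle and extracting a convergent subsequence. With the identification $c_2({\bf a}) = \alpha$ in hand, Theorem~\ref{specialthm} delivers the rational-or-infinite dichotomy and Corollary~\ref{specialcor} delivers computability.
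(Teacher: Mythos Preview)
Your proposal is correct and follows exactly the paper's approach: identify $c_2({\bf a})$ with the largest special point of $\quo_k(L')$ and then invoke Theorem~\ref{specialthm} and Corollary~\ref{specialcor}. The paper asserts this identification in a single sentence, whereas you supply the two-sided verification (including the period-boosting argument for the bounded-period case), so your write-up is more detailed but not different in substance.
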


\begin{proof}
Consider $L'$ as defined in the proof of Theorem~\ref{effectivethm}.  Then
$ c_2({\bf a})$ is the largest special point in $\quo_k(L')$.
Now apply Theorem~\ref{specialthm} and Corollary~\ref{specialcor}.
\end{proof}

Yet another variation is the
{\it initial critical exponent} ${\rm ice}_1 ({\bf a})$, introduced in
\cite{Berthe&Holton&Zamboni:2006} (also see \cite{Adamczewski:2010}),
where the supremum of exponents
is taken over all {\it prefixes} of a given infinite word,
as opposed to all factors.   
There is also the variant ${\rm ice}_2$, where we consider only the
exponents that work for {\it arbitrarily large} prefixes.
Our results also apply to both these cases.

\begin{theorem}
Let ${\rm ice}_1 ({\bf a}) = \sup \lbrace \exp(w) \ : \ w
\text{ is a prefix of } {\bf a} \rbrace$ and
let ${\rm ice}_2 ({\bf a}) = \sup \lbrace \beta \ : \ 
\forall N \geq 1 \ \exists w \text{ with } |w| \geq N
\text{ and } w^\beta \text{ a prefix of } {\bf a} \rbrace.$
If $\bf a$ is a $k$-automatic sequence, then both the quantities
${\rm ice}_1 ({\bf a})$ and
${\rm ice}_2 ({\bf a})$ are either rational or infinite,
and are computable.
\end{theorem}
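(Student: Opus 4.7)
The plan is to mirror the proofs of Theorems~\ref{effectivethm} and \ref{spec2} almost verbatim, replacing the existentially quantified starting position $i$ by the fixed value $0$. Concretely, I would use the techniques of \cite{Allouche&Rampersad&Shallit:2009,Charlier&Rampersad&Shallit:2011} to effectively construct a two-dimensional DFA $M$ accepting
\[
L''' \ = \ \lbrace (p,q)_k \ : \ {\bf a}[0..q-p-1] = {\bf a}[p..q-1] \rbrace
\ = \ \lbrace (p,q)_k \ : \ \forall j,\ 0 \leq j < q-p,\ {\bf a}[j] = {\bf a}[j+p] \rbrace .
\]
The right-hand side is a first-order formula over the structure $(\mathbb{N}, +, {\bf a})$, whose theory is decidable for $k$-automatic $\bf a$; this provides the regularity (and effective constructibility) of $L'''$, which is the only step that is not purely bookkeeping.

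Once $M$ is in hand, the first half of the theorem is immediate: by construction, every prefix $w$ of $\bf a$ of length $q$ with period $p$ is witnessed by $(p,q)_k \in L'''$, and conversely every element of $L'''$ encodes a period-length pair for a prefix, so
\[
{\rm ice}_1({\bf a}) \ = \ \sup\, \quo_k(L''').
\]
Applying Theorem~\ref{supthm} and Corollary~\ref{computethm} to $L'''$, we conclude that ${\rm ice}_1({\bf a})$ is either rational or infinite, and is computable from $M$ (with the infinite case detectable via Corollary~\ref{infcor}).

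For the second half, note that the condition defining ${\rm ice}_2({\bf a})$ asks for $\beta$ to be approximated by exponents $|w|/|x|$ of prefixes $w = x^\beta$ of arbitrarily large length. By exactly the same translation used in the proof of Theorem~\ref{spec2}, this is precisely the statement that $\beta$ is a special point of $\quo_k(L''')$ in the sense of Section~\ref{special}, and ${\rm ice}_2({\bf a})$ is the largest such point. Theorem~\ref{specialthm} and Corollary~\ref{specialcor} then give rationality-or-infiniteness and computability.

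The only step that requires any real care is the construction of $M$; everything else is a direct invocation of the machinery developed in Sections~\ref{sup} and \ref{special}. I would therefore expect the write-up to be short, with essentially one sentence exhibiting $L'''$ as a first-order definable set and three lines citing the previous theorems and corollaries.
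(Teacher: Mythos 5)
Your proposal is correct and matches the paper's proof essentially verbatim: the paper also replaces $L'$ by the language $L_i = \lbrace (p,q)_k : {\bf a}[0..q-p-1] = {\bf a}[p..q-1] \rbrace$ (your $L'''$) and then invokes Theorem~\ref{supthm}/Corollary~\ref{computethm} for ${\rm ice}_1$ and Theorem~\ref{specialthm}/Corollary~\ref{specialcor} for ${\rm ice}_2$.
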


\begin{proof}
Here the proof is exactly like the proofs of 
Theorems~\ref{effectivethm} and \ref{spec2}, with the difference that we
replace $L'$ defined in the proof of Theorem~\ref{effectivethm}
with the language
$$L_i = \lbrace (p,q)_k \ : \ {\bf a}[0..q-p-1] = {\bf a}[p..q-1] \rbrace .$$
Then ${\rm ice}_1 ({\bf a}) = \sup \quo_k (L_i)$,
while ${\rm ice}_2 ({\bf a}) $ is the largest special point in
$\quo_k (L_i)$.  
\hphantom{aaa} 
\end{proof}

Finally, our results also apply to the so-called {\it Diophantine
exponent} ${\rm Dio}({\bf a})$, introduced in \cite{Adamczewski&Bugeaud:2007}
(also see \cite{Adamczewski:2010}).  It is defined as
the supremum of real numbers $\beta$ for which
there exist arbitrarily long prefixes of $\bf a$ 
that can be expressed in the form as $u v^\tau$ for some real number $\tau$
and finite words $u, v$ such that
$|u v^\tau | / |uv| \geq \beta$.    The following results complement
those in \cite{Bugeaud&Krieger&Shallit:2011}.

\begin{theorem}
If $\bf a$ is a $k$-automatic sequence, then
${\rm Dio}({\bf a})$ is either rational or infinite, and is computable.
\end{theorem}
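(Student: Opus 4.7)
The plan is to proceed exactly as in the proofs of Theorems~\ref{effectivethm} and~\ref{spec2}, by constructing an appropriate two-dimensional regular language and then invoking the results of Section~\ref{special}. Concretely, I define
\begin{displaymath}
L_D = \lbrace (N, M)_k \ : \ 1 \leq M \leq N, \ \exists m \text{ with } 0 \leq m < M, \ \forall i \text{ with } M \leq i < N, \ {\bf a}[i] = {\bf a}[i - M + m] \rbrace.
\end{displaymath}
Setting $u = {\bf a}[0..m-1]$ and $v = {\bf a}[m..M-1]$, the inner condition says precisely that ${\bf a}[0..N-1] = uv^\tau$ with $|uv| = M$ and $\tau = (N-m)/(M-m) \geq 1$. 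Hence $(N, M) \in L_D$ if and only if the prefix of length $N$ of $\bf a$ admits such a factorization with $N/M = |uv^\tau|/|uv|$. Because all quantifiers are bounded and the conditions on $\bf a$ are directly readable from its DFA, the techniques of \cite{Allouche&Rampersad&Shallit:2009,Charlier&Rampersad&Shallit:2011} effectively produce a DFA accepting $L_D$.

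Next, I would show that ${\rm Dio}({\bf a})$ is precisely the largest special point of $\quo_k(L_D)$. For one direction, given any special point $\alpha$, there are infinitely many distinct $(N_j, M_j) \in L_D$ with $N_j/M_j \to \alpha$; since at most $N$ values of $M$ are admissible for each fixed $N$, we may pass to a subsequence on which $N_j \to \infty$, thereby exhibiting, for each threshold and each $\epsilon > 0$, an arbitrarily long prefix $uv^\tau$ with $|uv^\tau|/|uv| > \alpha - \epsilon$, so ${\rm Dio}({\bf a}) \geq \alpha$. For the reverse, let $h(N)$ denote the maximum value of $N/M$ attained over admissible $M$ (which is defined for every $N$, since $(N,N) \in L_D$ always). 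The definition of the Diophantine exponent then yields ${\rm Dio}({\bf a}) = \limsup_{N \to \infty} h(N)$; choosing $N_n \to \infty$ with $h(N_n) > {\rm Dio}({\bf a}) - 1/n$ and $M_n$ realizing $h(N_n)$ produces distinct pairs in $L_D$ with $N_n/M_n \to {\rm Dio}({\bf a})$ (using Bolzano--Weierstrass when the exponent is finite), which makes ${\rm Dio}({\bf a})$ itself a special point.

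Finally, Theorem~\ref{specialthm} and Corollary~\ref{specialcor} applied to $L_D$ yield immediately that ${\rm Dio}({\bf a})$ is rational or infinite and is effectively computable. I do not anticipate any deep obstacle: the only care required is in correctly formulating $L_D$ and in the bookkeeping that identifies ${\rm Dio}({\bf a})$ with the largest special point of $\quo_k(L_D)$; all the substantive technical content has already been developed in Sections~\ref{sup} and~\ref{special}.
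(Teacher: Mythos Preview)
Your proposal is correct and is essentially the paper's own proof: under the change of variables $i=m$, $p=M-m$, $\ell=N-m$, your language $L_D$ coincides with the paper's $L_d=\{(i+\ell,i+p)_k : i\geq 0,\ \ell\geq p\geq 1,\ {\bf a}[i..i+\ell-p-1]={\bf a}[i+p..i+\ell-1]\}$, and both proofs finish by applying Theorem~\ref{specialthm} and Corollary~\ref{specialcor}. Your write-up is in fact more explicit than the paper's in justifying that ${\rm Dio}({\bf a})$ equals the largest special point of $\quo_k(L_D)$, a step the paper leaves implicit.
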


\begin{proof}
Here we follow the proof of Theorem~\ref{spec2} once more,
except now we work with the language
$$L_d = \lbrace (i+\ell,i+p)_k \ : \exists i \geq 0, \ell \geq p \geq 1
\text{ such that } {\bf a}[i..i+\ell-p-1] =
{\bf a}[i+p..i+\ell-1] \rbrace.$$  
The claims now follow from Theorem~\ref{specialthm} and Corollary~\ref{specialcor}.
\end{proof}

\section{Other applications}

Theorem~\ref{supthm} and Corollary~\ref{computethm}
have applications to other problems.

A sequence ${\bf a}$ is said to be {\it recurrent} if every factor that occurs,
occurs infinitely often.  It is {\it linearly recurrent} if there exists
a constant $C$ such that for all $\ell \geq 0$, and all factors $x$
of length $\ell$ occurring in $\bf a$, any two consecutive occurrences of
$x$ are separated by at most $C\ell$ positions.  

\begin{theorem}
It is decidable if a $k$-automatic sequence $\bf a$
is linearly recurrent.  If $\bf a$
is linearly recurrent, the optimal constant $C$ is computable.
\end{theorem}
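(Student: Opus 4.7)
The plan is to follow the template of Theorem~\ref{effectivethm}: reduce the problem to computing $\sup\, \quo_k(L_r)$ for an effectively computable regular language $L_r \subseteq (\Sigma_k^2)^*$, then invoke Corollaries~\ref{infcor} and~\ref{computethm}. Concretely, I would build a two-dimensional DFA accepting
\begin{align*}
L_r = \bigl\{(d,\ell)_k :\;& d \geq 1,\ \ell \geq 1, \text{ and there exists } i \geq 0 \text{ such that} \\
& {\bf a}[i..i+\ell-1] = {\bf a}[i+d..i+d+\ell-1], \text{ and for all } j \\
& \text{with } i < j < i+d, \ {\bf a}[j..j+\ell-1] \neq {\bf a}[i..i+\ell-1] \bigr\}.
\end{align*}
The first clause forces the factor of length $\ell$ starting at position $i$ to reappear starting at position $i+d$; the second clause rules out any occurrence of this factor strictly between positions $i$ and $i+d$. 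Hence $(d,\ell)_k \in L_r$ if and only if some factor of $\bf a$ of length $\ell$ has two \emph{consecutive} occurrences separated by exactly $d$ positions.

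Once $L_r$ is in hand, both parts of the theorem follow at once. By the very definition of linear recurrence, $\bf a$ is linearly recurrent if and only if $\sup\, \quo_k(L_r) < \infty$, and when this supremum is finite it is precisely the optimal constant $C$. Corollary~\ref{infcor} therefore decides linear recurrence, and Corollary~\ref{computethm} computes $C$ in the finite case. Note that $L_r$ is automatically nonempty for any infinite $\bf a$ over a finite alphabet, since some letter must repeat, so there is no degenerate edge case to worry about.

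The main obstacle is the effective construction of the DFA for $L_r$. Its defining predicate combines an outer existential quantifier over the starting position $i$, an inner universal quantifier over intermediate positions $j \in (i,i+d)$, and an innermost existential quantifier over a mismatch position $m \in [0,\ell)$; every atomic subformula merely compares two letters of $\bf a$ or two positions. This is exactly the class of first-order formulas over an automatic sequence that is handled by the decision-procedure machinery of~\cite{Allouche&Rampersad&Shallit:2009,Charlier&Rampersad&Shallit:2011}: existential quantifiers translate to projections of two-dimensional DFAs, universal quantifiers are obtained by negation combined with determinization, and equality of letters of $\bf a$ is computed from $\bf a$'s own automaton. The construction is therefore routine in principle but is where all of the actual work of the proof resides.
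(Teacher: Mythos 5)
Your proposal is correct and follows essentially the same route as the paper: you construct the same regular language of pairs (gap between consecutive occurrences, factor length) via the first-order decision machinery of \cite{Allouche&Rampersad&Shallit:2009,Charlier&Rampersad&Shallit:2011}, and then apply Corollary~\ref{infcor} to decide finiteness and Corollary~\ref{computethm} to compute the optimal constant. No gaps.
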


\begin{proof}
First, as in \cite{Charlier&Rampersad&Shallit:2011}, we construct an
automaton accepting the language
\begin{align*}
L &= \{ (n,l)_k  \ : \\
& \text{ (a) there exists $i \geq 0$ s.\ t.\ 
${\bf a}[i+j] = {\bf a}[i+n+j]$ for all $j,\ 0 \leq j < \ell$, and} \\
& \text{ (b) there is no $t,\ 0 < t < n$ s.\ t.\ 
${\bf a}[i+j] = {\bf a}[i+t+j]$ for all $j, 0 \leq j < \ell$ } \} 
\end{align*}

Another way to say this is that $L$ consists of the base-$k$ representation
of those pairs of integers $(n,\ell)$ such
that (a) there is some factor of length $\ell$ for which
there is another occurrence at distance $n$ and (b) this
occurrence is actually the very next occurrence.

Now from Theorem~\ref{supthm} we know that
$\sup \{ n/\ell \ :  \ (n,\ell)_k \in L \}$
is either infinite or rational.  In the
latter case this sup is computable, by Corollary~\ref{computethm} 
and this gives the optimal constant $C$
for the linear recurrence of $\bf a$.
\end{proof}

\section{Open problems}

In this paper we have examined $\sup_{x \in L} \quo_k(x)$ for $L$ 
regular.
We do not currently know how to prove analogous results
for $L$ context-free.
Nor do we know how to extend the results on critical exponents
to the more general case of morphic sequences.

\section{Acknowledgments}

In a previous version of this paper that was presented at WORDS 2011,
the second author stated that if $L$ is regular,
then $\limsup_{x \in L} \quo_k(x)$ is
either rational or infinite, and is computable \cite{Shallit:2011}.
Unfortunately the envisioned proof of this more general result
contained a subtle flaw that we have not been able to repair.
The present paper contains slightly weaker results.
We are indebted to Micha\"el Cadilhac for raising the question.

We would like to thank Wojciech Rytter, who asked a question at
the 2011 Dagstuhl meeting on combinatorics on words that led to this paper.
We also thank Eric Rowland and Victor Moll, whose hospitality at Tulane
University permitted the writing of the first draft
of this paper.  Finally, we thank
Jean-Paul Allouche, Yann Bugeaud, and \'Emilie Charlier
for their helpful suggestions.


\begin{thebibliography}{10}
\providecommand{\bibitemdeclare}[2]{}
\providecommand{\urlprefix}{Available at }
\providecommand{\url}[1]{\texttt{#1}}
\providecommand{\href}[2]{\texttt{#2}}
\providecommand{\urlalt}[2]{\href{#1}{#2}}
\providecommand{\doi}[1]{doi:\urlalt{http://dx.doi.org/#1}{#1}}
\providecommand{\bibinfo}[2]{#2}

\bibitemdeclare{article}{Adamczewski:2010}
\bibitem{Adamczewski:2010}
\bibinfo{author}{B.~Adamczewski} (\bibinfo{year}{2010}):
  \emph{\bibinfo{title}{On the expansion of some exponential periods in an
  integer base}}.
\newblock {\sl \bibinfo{journal}{Math. Ann.}} \bibinfo{volume}{346}, pp.
  \bibinfo{pages}{107--116}.

\bibitemdeclare{article}{Adamczewski&Bugeaud:2007}
\bibitem{Adamczewski&Bugeaud:2007}
\bibinfo{author}{B.~Adamczewski} \& \bibinfo{author}{Y.~Bugeaud}
  (\bibinfo{year}{2007}): \emph{\bibinfo{title}{Dynamics for $\beta$-shifts and
  {Diophantine} approximation}}.
\newblock {\sl \bibinfo{journal}{Ergodic Theory Dynam. Systems}}
  \bibinfo{volume}{27}, pp. \bibinfo{pages}{1695--1711}.

\bibitemdeclare{article}{Allouche&Bousquet-Melou:1994}
\bibitem{Allouche&Bousquet-Melou:1994}
\bibinfo{author}{J.-P. Allouche} \& \bibinfo{author}{M.~Bousquet-M\'elou}
  (\bibinfo{year}{1994}): \emph{\bibinfo{title}{Facteurs des suites de
  {Rudin-Shapiro} {g\'en\'eralis\'ees}}}.
\newblock {\sl \bibinfo{journal}{Bull. Belg. Math. Soc.}} \bibinfo{volume}{1},
  pp. \bibinfo{pages}{145--164}.

\bibitemdeclare{article}{Allouche&Rampersad&Shallit:2009}
\bibitem{Allouche&Rampersad&Shallit:2009}
\bibinfo{author}{J.-P. Allouche}, \bibinfo{author}{N.~Rampersad} \&
  \bibinfo{author}{J.~Shallit} (\bibinfo{year}{2009}):
  \emph{\bibinfo{title}{Periodicity, repetitions, and orbits of an automatic
  sequence}}.
\newblock {\sl \bibinfo{journal}{Theoret. Comput. Sci.}} \bibinfo{volume}{410},
  pp. \bibinfo{pages}{2795--2803}, \doi{10.1016/j.tcs.2009.02.006}.

\bibitemdeclare{book}{Allouche&Shallit:2003b}
\bibitem{Allouche&Shallit:2003b}
\bibinfo{author}{J.-P. Allouche} \& \bibinfo{author}{J.~Shallit}
  (\bibinfo{year}{2003}): \emph{\bibinfo{title}{Automatic Sequences: Theory,
  Applications, Generalizations}}.
\newblock \bibinfo{publisher}{Cambridge University Press}.

\bibitemdeclare{article}{Berthe&Holton&Zamboni:2006}
\bibitem{Berthe&Holton&Zamboni:2006}
\bibinfo{author}{V.~{Berth\'e}}, \bibinfo{author}{C.~Holton} \&
  \bibinfo{author}{L.~Q. Zamboni} (\bibinfo{year}{2006}):
  \emph{\bibinfo{title}{Initial powers of {Sturmian} sequences}}.
\newblock {\sl \bibinfo{journal}{Acta Arith.}} \bibinfo{volume}{122}, pp.
  \bibinfo{pages}{315--347}.

\bibitemdeclare{article}{Bugeaud&Krieger&Shallit:2011}
\bibitem{Bugeaud&Krieger&Shallit:2011}
\bibinfo{author}{Y.~Bugeaud}, \bibinfo{author}{D.~Krieger} \&
  \bibinfo{author}{J.~Shallit} (\bibinfo{year}{2011}):
  \emph{\bibinfo{title}{Morphic and automatic words: maximal blocks and
  Diophantine approximation}}.
\newblock {\sl \bibinfo{journal}{Acta Arith.}} \bibinfo{volume}{149}, pp.
  \bibinfo{pages}{181--199}, \doi{10.4064/aa149-2-7}.

\bibitemdeclare{incollection}{Charlier&Rampersad&Shallit:2011}
\bibitem{Charlier&Rampersad&Shallit:2011}
\bibinfo{author}{E.~Charlier}, \bibinfo{author}{N.~Rampersad} \&
  \bibinfo{author}{J.~Shallit} (\bibinfo{year}{2011}):
  \emph{\bibinfo{title}{Enumeration and decidable properties of automatic
  sequences}}.
\newblock In \bibinfo{editor}{G.~Mauri} \& \bibinfo{editor}{A.~Leporati},
  editors: {\sl \bibinfo{booktitle}{Developments in Language Theory, 15th
  International Conference, DLT 2011}}, {\sl \bibinfo{series}{Lect. Notes in
  Comput. Sci.}} \bibinfo{volume}{6795}, pp. \bibinfo{pages}{165--179}.
\newblock \bibinfo{note}{Available at {\tt http://arxiv.org/abs/1102.3698}}.

\bibitemdeclare{article}{Cobham:1969}
\bibitem{Cobham:1969}
\bibinfo{author}{A.~Cobham} (\bibinfo{year}{1969}): \emph{\bibinfo{title}{On
  the base-dependence of sets of numbers recognizable by finite automata}}.
\newblock {\sl \bibinfo{journal}{Math. Systems Theory}} \bibinfo{volume}{3},
  pp. \bibinfo{pages}{186--192}, \doi{10.1007/BF01746527}.

\bibitemdeclare{article}{Cobham:1972}
\bibitem{Cobham:1972}
\bibinfo{author}{A.~Cobham} (\bibinfo{year}{1972}):
  \emph{\bibinfo{title}{Uniform tag sequences}}.
\newblock {\sl \bibinfo{journal}{Math. Systems Theory}} \bibinfo{volume}{6},
  pp. \bibinfo{pages}{164--192}, \doi{10.1007/BF01706087}.

\bibitemdeclare{article}{Currie&Rampersad:2008}
\bibitem{Currie&Rampersad:2008}
\bibinfo{author}{J.~D. Currie} \& \bibinfo{author}{N.~Rampersad}
  (\bibinfo{year}{2008}): \emph{\bibinfo{title}{For each $\alpha > 2$ there is
  an infinite binary word with critical exponent $\alpha$}}.
\newblock {\sl \bibinfo{journal}{Elect. J. Combinatorics}}
  \bibinfo{volume}{15}:\bibinfo{eid}{\#N34}.
\newblock \bibinfo{note}{Available at {\tt
  http://www.combinatorics.org/Volume\_15/Abstracts/v15i1n34.htm}}.

\bibitemdeclare{article}{Krieger:2007}
\bibitem{Krieger:2007}
\bibinfo{author}{D.~Krieger} (\bibinfo{year}{2007}): \emph{\bibinfo{title}{On
  critical exponents in fixed points of non-erasing morphisms}}.
\newblock {\sl \bibinfo{journal}{Theor. Comput. Sci.}} \bibinfo{volume}{376},
  pp. \bibinfo{pages}{70--88}, \doi{10.1016/j.tcs.2007.01.020}.

\bibitemdeclare{phdthesis}{Krieger:2008}
\bibitem{Krieger:2008}
\bibinfo{author}{D.~Krieger} (\bibinfo{year}{2008}):
  \emph{\bibinfo{title}{Critical exponents and stabilizers of infinite words}}.
\newblock Ph.D. thesis, \bibinfo{school}{University of Waterloo}.

\bibitemdeclare{article}{Krieger:2009}
\bibitem{Krieger:2009}
\bibinfo{author}{D.~Krieger} (\bibinfo{year}{2009}): \emph{\bibinfo{title}{On
  critical exponents in fixed points of $k$-uniform binary morphisms}}.
\newblock {\sl \bibinfo{journal}{RAIRO Info. Theor. Appl.}}
  \bibinfo{volume}{43}, pp. \bibinfo{pages}{41--68}, \doi{10.1051/ita:2007042}.

\bibitemdeclare{article}{Krieger&Shallit:2007}
\bibitem{Krieger&Shallit:2007}
\bibinfo{author}{D.~Krieger} \& \bibinfo{author}{J.~Shallit}
  (\bibinfo{year}{2007}): \emph{\bibinfo{title}{Every real number greater than
  $1$ is a critical exponent}}.
\newblock {\sl \bibinfo{journal}{Theoret. Comput. Sci.}} \bibinfo{volume}{381},
  pp. \bibinfo{pages}{177--182}, \doi{10.1016/j.tcs.2007.04.037}.

\bibitemdeclare{article}{Kunze&Shyr&Thierrin:1981}
\bibitem{Kunze&Shyr&Thierrin:1981}
\bibinfo{author}{M.~Kunze}, \bibinfo{author}{H.~J. Shyr} \&
  \bibinfo{author}{G.~Thierrin} (\bibinfo{year}{1981}):
  \emph{\bibinfo{title}{$h$-bounded and semidiscrete languages}}.
\newblock {\sl \bibinfo{journal}{Info. Control}} \bibinfo{volume}{51}, pp.
  \bibinfo{pages}{147--187}, \doi{10.1016/S0019-9958(81)90253-9}.

\bibitemdeclare{article}{Mignosi&Pirillo:1992}
\bibitem{Mignosi&Pirillo:1992}
\bibinfo{author}{F.~Mignosi} \& \bibinfo{author}{G.~Pirillo}
  (\bibinfo{year}{1992}): \emph{\bibinfo{title}{Repetitions in the {Fibonacci}
  infinite word}}.
\newblock {\sl \bibinfo{journal}{RAIRO Info. Theor. Appl.}}
  \bibinfo{volume}{26}, pp. \bibinfo{pages}{199--204}.

\bibitemdeclare{article}{Paun&Salomaa:1995}
\bibitem{Paun&Salomaa:1995}
\bibinfo{author}{G.~P\v{a}un} \& \bibinfo{author}{A.~Salomaa}
  (\bibinfo{year}{1995}): \emph{\bibinfo{title}{Thin and slender languages}}.
\newblock {\sl \bibinfo{journal}{Discrete Appl. Math.}} \bibinfo{volume}{61},
  pp. \bibinfo{pages}{257--270}, \doi{10.1016/0166-218X(94)00014-5}.

\bibitemdeclare{unpublished}{Rowland&Shallit:2011}
\bibitem{Rowland&Shallit:2011}
\bibinfo{author}{E.~Rowland} \& \bibinfo{author}{J.~Shallit}
  (\bibinfo{year}{2011}): \emph{\bibinfo{title}{$k$-automatic sets of rational
  numbers}}.
\newblock \bibinfo{note}{To appear, LATA 2011 conference. Available at {\tt
  http://arxiv.org/abs/1110.2382}}.

\bibitemdeclare{article}{Shallit:1994}
\bibitem{Shallit:1994}
\bibinfo{author}{J.~Shallit} (\bibinfo{year}{1994}):
  \emph{\bibinfo{title}{Numeration systems, linear recurrences, and regular
  sets}}.
\newblock {\sl \bibinfo{journal}{Inform. Comput.}} \bibinfo{volume}{113}, pp.
  \bibinfo{pages}{331--347}, \doi{10.1006/inco.1994.1076}.

\bibitemdeclare{incollection}{Shallit:2011}
\bibitem{Shallit:2011}
\bibinfo{author}{J.~Shallit} (\bibinfo{year}{2011}): \emph{\bibinfo{title}{The
  critical exponent is computable for automatic sequences}}.
\newblock In \bibinfo{editor}{P.~Ambroz}, \bibinfo{editor}{S.~Holub} \&
  \bibinfo{editor}{Z.~{Mas\'akov\'a}}, editors: {\sl
  \bibinfo{booktitle}{Proceedings 8th International Conference, WORDS 2011}},
  pp. \bibinfo{pages}{231--239}.
\newblock \bibinfo{note}{Vol.\ 63 of {\it Elect. Proc. Theor. Comput. Sci.}.
  Available at {\tt http://arxiv.org/abs/1104.2303v2}}.

\end{thebibliography}
\end{document}